\newtheorem{theorem}{Theorem}[section]
\newtheorem{definition}[theorem]{Definition}
\newtheorem{lemma}[theorem]{Lemma}
\newtheorem{corollary}[theorem]{Corollary}
\theoremstyle{definition}
\newtheorem{example}[theorem]{Example}
\newcommand{\R}{\mathbb{R}} 
\newcommand{\Z}{\mathbb{Z}}
\newcommand{\M}{\mathcal{M}}
\newcommand{\I}{\mathcal{I}}
\renewcommand{\S}{\mathcal{S}}
\begin{document}
\pagestyle{plain}

\title{Uniqueness of Equilibria in Atomic Splittable Polymatroid Congestion Games}

\author[1]{Tobias Harks}
\author[2]{Veerle Timmermans \thanks{This work is part of the research programme \emph{Optimal Coordination Mechanisms for Distributed Resource Allocation} with project number 617.001.302, which is (partly) financed by the Netherlands Organisation for Scientific Research (NWO). }}

\affil[1]{Department of Mathematics, Augsburg University}
\affil[2]{Department of Quantitative Economics, Maastricht University}

\date{\today}
\maketitle

\begin{abstract}
We study uniqueness of Nash equilibria in atomic splittable congestion
games and derive a uniqueness result based on polymatroid theory: when the strategy space
of every player is a \emph{bidirectional flow polymatroid}, then equilibria are unique.
Bidirectional flow polymatroids are introduced as a subclass of polymatroids possessing 
certain exchange properties. We show that important cases such as base orderable matroids
can be recovered as a special case of bidirectional flow polymatroids.
On the other hand we show that matroidal set systems are in some sense
necessary to guarantee uniqueness of equilibria: for every atomic splittable congestion
game with at least three players and non-matroidal set systems per player, there is
an isomorphic game having multiple equilibria. Our results leave 
a gap between base orderable matroids and general matroids for which 
we do not know whether equilibria are unique.
\end{abstract}

\section{Introduction}
Congestion games as introduced in Rosenthal~\cite{Rosenthal73a} constitute an elegant  game-theoretic model describing the distributed allocation of  resources among selfish players.  
Specifically, such a game comprises a finite set of players, a finite set of resources and 
the pure strategies of a player are given by a set of allowable subsets of resources. 
In the context of \emph{network games}, the resources may correspond to edges of a graph
and the allowable subsets correspond to  paths connecting a source and a sink.
Resources have cost functions that depend on the number of players currently
using the resource. For a given strategy profile (collection of pure strategies of the players),
the disutility of each player is just the sum of resource' costs of the chosen subset of resources.
Rosenthal proved in his seminal paper that congestion games always admit a pure Nash equilibrium.

\subsection{Atomic Splittable Congestion Games}
Since the initial work of Rosenthal, several works studied related or generalized 
variants of congestion games. One such variant that we consider in this paper
are so-called \emph{atomic splittable congestion games}. In this model,  every player has a demand $d_i>0$ that she may split 
fractionally over the allowable subsets. The cost of a resource is then a function
of the total load assigned to it.
This class of games has  
applications in modeling packet-routing in communication networks (see Orda et al.~\cite{Orda93} and Korilis et al.~\cite{Korilis1997,KorilisLO95}), traffic networks (Haurie and Marcotte~\cite{Haurie85}) and logistics networks (Cominetti et al.~\cite{Cominetti09}).

Formally, there is a finite set of resources $E$, a finite set of players $N$, 
and each player $i\in N$ is associated with a weight $d_i \geq 0$ and a collection of allowable subsets of resources $\S_i \subseteq 2^E$, where $2^E$ denotes the power set of $E$. A strategy for player $i$ is then a (possibly fractional) distribution $\vec x_i\in \R_{\geq 0}^{|\S_i|}$ of the weight
over the allowable subsets $S \in $ $\S_i$. 
Thus, one can compactly represent the strategy space of every player $i\in N$ by the following polytope:
\begin{align}\label{def:strategy_space} P_i:=\{\vec{x}_i\in \R_{\geq 0}^{|\S_i|}:  \sum_{S\in\S_i} x_S= d_i\}.
\end{align}
We denote by $\vec{x}=(\vec{x}_i)_{i\in N}$ the overall strategy profile.
The induced load under $\vec{x}_i$ at $e$ is defined as $x_{i,e}:=\sum_{S\in \S_i:e\in S}x_S$
 and the total  load on $e$ is then given as $x_e:=\sum_{i\in N}x_{i,e}$. 
Resources have player-specific cost functions $c_{i,e} : \R_{\geq 0} \rightarrow \R_{\geq 0}$ which are assumed
to be non-negative, increasing, differentiable and convex.
The total cost of player $i$ in strategy distribution $\vec{x}$ is defined as
\[ \pi_{i}(\vec x)=\sum_{e\in E} c_{i,e}(x_e)\,x_{i,e}.\]
Each player wants to minimize the total cost on the used resources
and a Nash equilibrium is a strategy profile $\vec x$ from which
no player can unilaterally deviate and reduce its total cost.
Using that the strategy space is compact and cost functions are increasing and convex Kakutanis' fixed point theorem implies the existence of a Nash equilibrium \cite[Theorem 1]{Rosen65}.

\subsection{Uniqueness of Equilibria}
A fundamental property of a strategic game is the
uniqueness of equilibria. This property  is key to actually
predict the outcome of distributed resource allocation: if there are multiple equilibria
it is not clear upfront which equilibrium will be selected by the players.
This issue has been raised explicitly by Aumann~\cite{Aumann85} quoted below:
``...it is by no means clear how the players would arrive at an equilibrium, why they should play equilibrium strategies, and how a specific equilibrium would be chosen from among the set of all equilibria.''

An intriguing question in the field of atomic splittable congestion games is the possible non-uniqueness of equilibria.Let $\vec{x}$ and $\vec{y}$ be two equilibria. We say that $\vec{x}$ and $\vec{y}$ are different whenever there exists a player $i$ and resource $e$ such that $x_{i,e}\neq y_{i,e}$.
A variant on this question is whether or not there exist multiple equilibria such that there exists at least one resource $e$ for which $x_e \neq y_e$. We call this variant ``uniqueness up to induced load on the resources''. 

For non-atomic players 
and network congestion games on directed graphs, Milchtaich~\cite{Milchtaich05} proved that Nash equilibria are not unique when cost functions are player-specific. 
Uniqueness is only guaranteed if the underlying graph is two terminal $s$-$t$-\emph{nearly-parallel}. Richman and Shimkin~\cite{Richman07} extended this result to hold for atomic splittable network games. Bhaskar et al. \cite{Bhaskar15} looked at uniqueness up to induced load on the resources.
They proved that even when all players experience the same cost on a resource, there can exist multiple equilibria. They further proved that for two players, the Nash equilibrium is unique if and only if the underlying undirected graph is  generalized series-parallel. For multiple players of two types (players are of the same type if they have the same weight and share the same origin-destination pair), there is a unique equilibrium if and only if the underlying undirected graph is $s$-$t$-series-parallel. For more than two types of players, there is a unique equilibrium if and only if the underlying undirected graph is generalized nearly-parallel.

\subsection{Our Results and Outline of the Paper}
In this paper we study the uniqueness of equilibria for general set systems $(\S_i)_{i\in N}$.
Interesting combinatorial structures of the $\S_i$'s beyond paths may be
trees, forests, Steiner trees or tours all in a directed or undirected graph
or bases of matroids.

As our main result we give a sufficient condition for uniqueness based 
on the theory of polymatroids. We show that if the strategy space of every player 
is a polymatroid base polytope satisfying a special exchange property -- we term this class of polymatroids \emph{bidirectional flow polymatroids} -- the equilibria are unique.\footnote{The
formal definition of bidirectional flow polymatroids appears in Def.~\ref{def:bidirectional}.}
We demonstrate that bidirectional flow polymatroids are quite general as they
contain \emph{base-orderable matroids, gammoids, transversal and laminar matroids}. For an overview of special
cases that follow from our main result, see Figure~\ref{inclusionmatroids}. 

The uniqueness result is stated in Section~\ref{sec:main}.
In Section~\ref{sec:applications} we show that base-orderable matroids are a special case of bidirectional flow polymatroids. Definitions of polymatroid congestion games and bidirectional flow polymatroids are introduced in Sections~\ref{par:polymatroid} and~\ref{sec:bidirectional}, respectively. 

In Section~\ref{sec:nondifferentiable} and Section~\ref{sec:nonmatroids} we complement our uniqueness result by showing multiple equilibria exist when certain assumptions are dropped. In Section~\ref{sec:nondifferentiable} we discuss why it is necesarry for cost functions to be differentiable. In Section~\ref{sec:nonmatroids} we consider a game with at least three players for which the set systems $\S_i$ of all players $i\in N$ are \emph{not} bases of a matroid. Then there exists a game with strategy spaces $\phi(\S_i)$ isomorphic to $\S_i$
which admits multiple equilibria. Here, the term \emph{isomorphic} means that there is no a priori description on how the individual strategy spaces of players interweave in the ground set of resources.
Our results leave a gap between general matroids and base orderable matroids
for which we do not know whether or not equilibria are unique.

In Section~\ref{sec:undirected} we consider uniqueness of equilibria if
the set systems $\S_i$ correspond to paths in an undirected graph.
The instance used for showing multiplicity of
equilibria of non-matroid games can be seen as  a $3$-player game played on an undirected $3$-vertex cycle graph. From this we can derive a new characterization
of uniqueness of equilibria in undirected graphs.  If we assume at least three
players and if we do not specify beforehand which vertices
of the graph serve as sources or sinks, an undirected graph induces unique equilibria 
if and only if the graph has no cycle of length at least $3$.

\begin{figure}
\centering
\begin{tikzpicture}
\node[] at (0,0) {Uniform};
\node[] at (1,0) {$\subset_1$};
\node[] at (2,0) {Partition};
\node[rotate=-45] at (3.2,-0.3) {$\subset_3$};
\node[] at (4.4,-0.5) {Transversal};
\node[rotate=45] at (3.2,0.3) {$\subset_2$};
\node[] at (4.4,0.5) {Laminar};
\node[rotate=-30] at (5.7,0.3) {$\subset_4$};
\node[text width=3cm, text centered] at (7.1,-1.4) {Graphic matroid on GSP graph};
\node[rotate=45] at (5.7,-0.3) {$\subset_5$};
\node[] at (7.2,-0.6) {$\cup_6$};
\node[] at (7,0) {Gammoid};
\node[] at (8.2,0) {$\subset_7$};
\node[text width=2cm, text centered] at (9.3,0) {Strongly base orderable};
\node[] at (10.4,0) {$\subset_8$};
\node[text width=2cm, text centered] at (11.4,0) {Base orderable};
\end{tikzpicture}
\caption{Several well-known classes of matroids and the relations between them. Here GSP is short for generalized series-parallel. References and arguments for the seven inclusions can be found in Appendix~\ref{inclusionsproof}. }
\label{inclusionmatroids}
\end{figure}
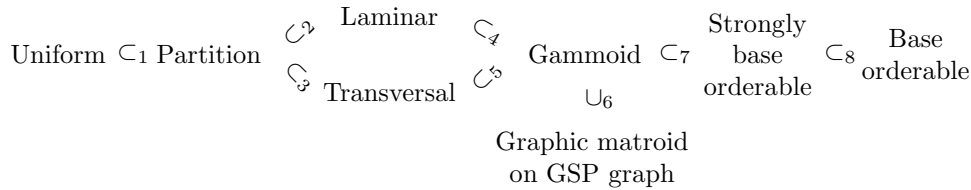

\subsection{Further Related Work}
Atomic splittable (network) congestion games have been first proposed by Orda et al.~\cite{Orda93} and Altman et al.~\cite{Altman02competitiverouting}
in the context of modeling routing in communication networks.  Other
applications include traffic and freight networks (cf. Cominetti et al.~\cite{CCS06})
and scheduling (cf. Huang~\cite{Huang11}).
Haurie and Marcotte~\cite{Haurie85} showed that classical nonatomic congestion games (cf. Beckmann et al.~\cite{Beckmann56} and Wardrop~\cite{Wardrop52})
can be modeled as atomic splittable congestion games by constructing a sequence
of games and taking the limit with respect to
the number of players. It follows that atomic splittable congestion games
are strictly more general as their nonatomic counterpart. Cominetti et al.~\cite{CCS06}, Harks~\cite{Harks:stack2011} and Roughgarden and Schoppmann~\cite{Roughgarden15} studied the price of anarchy in atomic splittable congestion games.

Matroid congestion games were first considered by Ackermann et al.~\cite{Ackermann08,Ackermann09}.
They showed that (unsplittable) weighted congestion
games possess pure Nash equilibria even for player-specific nondecreasing cost functions.
They also showed that the matroid property is the maximal property that gives
rise to a pure Nash equilibrium, that is, for any strategy space not satisfying the matroid property, there is an
instance of a weighted congestion game not having a pure Nash equilibrium. 
Integral polymatroid congestion games, a generalization of matroid congestion games, were later introduced in Harks, Klimm and Peis ~\cite{harks2014resource} (see also~\cite{HarksOosterwijkVredeveld}). In addition, polymatroid theory was recently
used in the context of nonatomic congestion games, where it is shown that matroid set systems 
are immune to the Braess paradox, see Fujishige et al.~\cite{Fujishige15}.

\section{Polymatroid Congestion Games}\label{par:polymatroid}
In polymatroid congestion games we assume that the
strategy space for every player 
corresponds to a polymatroid base polytope. 

In order to define polymatroids we first have to introduce submodular functions.
A function $\rho:2^E \rightarrow \R$ is called submodular if $\rho(U)+\rho(V) \geq \rho(U \cup V) + \rho(U \cap V)$ for all $U, V \subseteq E$.
It is called monotone if $\rho(U) \leq \rho(V)$ for all $U \subseteq V$, and normalized if $\rho(\emptyset) = 0$.
Given a submodular, monotone and normalized function $\rho$, the pair $(E,\rho)$ is called a \emph{polymatroid}. The associated \emph{polymatroid base polytope} is defined as:
\[ P_{\rho} := \left\{ \vec x \in \R_{\geq 0}^{E} \mid x(U) \leq \rho(U) 
\; \forall U \subseteq E, \; x(E) = \rho(E)\right\},\]
where $x(U) := \sum_{e \in U} x_e$ for all $U\subseteq E$.

In a polymatroid congestion game, we associate with every player $i$ a player-specific polymatroid $(E,\rho_i)$
and assume that the strategy space of player $i$ is defined by the (player-specific) polymatroid base polytope $P_{\rho_i}$.
\[ P_{\rho_i}:= \left\{ \vec x_i \in \R_{\geq 0}^{E} \mid x_i(U) \leq \rho_i(U) \; \forall U \subseteq E, \; {x_i(E)} = \rho_i(E)\right\}.\]

From now on, when we mention a  polymatroid congestion game, we mean a weighted atomic splittable polymatroid congestion game. We give three examples of polymatroid congestion games:

\begin{example}[Queueing Games (cf.~\cite{Korilis1997})]
\label{singletongames}
Let $Q=\{q_1, \dots q_m\}$ be a set of $M/M/1$ queues served in a first-come-first-served fashion and $N=\{1, \dots, n\}$  a set of companies who independently send packets with arrival rates $d_1, \dots d_n$. Every queue $q$ has a single server with exponentially distributed service time with mean $1/\mu_q$, where $\mu_q>0$. Each packet is routed to a single server $q$ out of a set of allowable queues, depending on the company. Given a distribution of packets $\vec{x} \in {\R^m_{\geq 0}}$, the mean delay of queue $q$ can be computed as $c_q(x_q)=\frac{1}{\mu_q-x_q}$. In this case the sets $\S_i$ are uniform rank-1 matroids, which are also called \emph{singleton games}.
\end{example}

\begin{example}[Transversal games]
\label{transversalgames}
{Consider a finite set $E$ of storing facilities, a finite set $A$ of locations and a finite set $N$ of players. Each player has to store an amount of $d_i$ of divisible goods in each area $j \in A$. Each area $j$ can be served from any storing facility within a given set $S_j \subseteq E$}. The sets $S_j$ may overlap, even for the same player $i$. However, due to reliability reasons, a player cannot store more than $d_j$ goods in one storing facility. The cost $c_{i,e}$ for using a specific storing facility depends on the total amount of goods that have to be stored in storing facility $e$. The more goods need to be stored, the larger the cost to use it. 

{This setting can be modeled as a bipartite graph $G$ on vertex sets $E$ and $A$, where an edge between area $j \in A$ and storage facility $e \in E$ exists if and only if area $j$ can be served from storage facility $e$. In a feasible strategy a player divides its goods over bases of the transversal matroid of this graph: subsets of storage facilities that are the endpoints of a maximal matching in $G$. Hence, the strategy space of every player $i\in N$ corresponds to the base polytope $P_{d_i \cdot rk_i}$, where $rk_i$ is the rank function of a \emph{transversal matroid}.}
\end{example}

\begin{example}[Matroid Congestion Games] \label{matroidcongestiongame}
{A \emph{matroid} $\mathcal{M}$ is a pair $(E, \mathcal{I})$, where $E$ is a finite set of resources and $\mathcal{I}$ is a family of subsets of $E$, called the \emph{independent sets}. Set $\mathcal{I}$ has the following three properties: 
\begin{enumerate}
	\item The empty set is an independent set: $\emptyset \in \mathcal{I}$.
	\item Set $\mathcal{I}$ is closed under taking subsets: if $I \subseteq J$ and $J\in \mathcal{I}$, then $I \in \mathcal{I}$.
	\item Set $\mathcal{I}$ has the \emph{exchange property}: if $I,J \in \mathcal{I}$ and $|I|<|J|$, then there exists an $e \in J$ such that $I \cup \{e\} \in \mathcal{I}$.
\end{enumerate}
A \emph{basis} is an independent set that becomes dependent on adding any element of $E$. The \emph{base set} $\mathcal{B}$ contains all bases of $(E, \mathcal{I})$.}

Consider an \emph{atomic splittable matroid congestion model}, where for every $i \in N$ the allowable subsets are the base set $\mathcal{B}_i$ of a matroid $\M_i=(E,\I_i)$. The rank function $rk_i: 2^E \rightarrow \R$ of matroid $\M_i$ is defined as: $rk_i(S):= \max \{ |U|\;|\;U \subseteq S \text{ and } U \in \I_i\} $ for all $S \subseteq E$, and is submodular, monotone and normalized  \cite{Pym70}. Moreover, the characteristic vectors of the bases in $\mathcal{B}_i$ are exactly the vertices of the polymatroid base polytope $P_{rk_i}$. It follows that the polytope $P_i:= \{\vec{x} \in {\R^{|\mathcal{B}_i|}_{\geq 0}} | \sum_{B \in \mathcal{B}_i} x_B=d_i \}$ corresponds to strategy distributions that lead to load vectors in the following polytope:
$$P_{d_i \cdot rk_i} = \left\{ \vec x_i \in {\R_{\geq 0}^{E}} | x_i(U) \leq d_i \cdot rk_i(U) \; \forall U \subseteq E, x_i(E) =d_i \cdot rk_i(E) \right\}. $$
Hence matroid congestion models are a special case of polymatroid congestion models. Both the singleton games in Example~\ref{singletongames} and the transversal games in Example~\ref{transversalgames} are a special case of matroid congestion games.
\end{example}


\section{Bidirectional Flow Polymatroids}\label{sec:bidirectional}
We provide a sufficient condition for a class of polymatroid congestion games to have a unique Nash equilibrium. We prove that if the strategy space of every player is the base polytope of a \emph{bidirectional flow polymatroid}, Nash equilibria are unique. In order to define the class of bidirectional flow polymatroids we first discuss some basic properties of polymatroids. We start with a generalization of the strong exchange property for matroids. 
Let $\chi_e\in \Z^{|E|}$ be the characteristic vector with $\chi_e(e)=1$, and $\chi_e(e')=0$ for all $e'\neq e$.
\begin{lemma}[Strong exchange property polymatroids (Murota \cite{Murota03})] \label{strongexchangeproperty}
Let $P_\rho$ be a polymatroid base polytope defined on $(E,\rho)$. Let $\vec{x},\vec{y} \in P_\rho$ and suppose $x_e > y_e$ for some $e \in E$. Then there exists an $e' \in E$ with $x_{e'} < y_{e'}$ and an $\epsilon>0$ such that:
$$\vec{x}+\epsilon(\chi_{e'}-\chi_e)\in P_\rho \text{ and } \vec{y}+\epsilon(\chi_e-\chi_{e'})\in P_\rho .$$
\end{lemma}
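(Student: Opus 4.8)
The plan is to mimic the proof of the strong basis exchange axiom for matroids, with fundamental circuits and cocircuits replaced by extremal tight sets. Call a set $U\subseteq E$ \emph{$\vec{x}$-tight} if $x(U)=\rho(U)$, and similarly \emph{$\vec{y}$-tight}. Since $U\mapsto x(U)$ is modular and $\rho$ is submodular, the $\vec{x}$-tight sets are closed under union and intersection (and likewise for $\vec{y}$); moreover $\emptyset$ is tight because $\rho$ is normalized, and $E$ is tight because $\vec{x},\vec{y}$ lie in the base polytope. Hence I can speak of $A:=$ the \emph{maximal} $\vec{x}$-tight set with $e\notin A$, and $B:=$ the \emph{minimal} $\vec{y}$-tight set with $e\in B$. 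These are the analogues of the fundamental cocircuit of $e$ with respect to $\vec{x}$ and the fundamental circuit of $e$ with respect to $\vec{y}$.

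First I would reduce the lemma to a purely combinatorial claim: it suffices to produce $e'\in E$ with $x_{e'}<y_{e'}$, $e'\notin A$, and $e'\in B$. Indeed, for such an $e'$ the numbers $x_e$, $y_{e'}$, every slack $\rho(U)-x(U)$ over sets $U$ with $e'\in U$, $e\notin U$, and every slack $\rho(U)-y(U)$ over sets $U$ with $e\in U$, $e'\notin U$, are all strictly positive: a zero slack of the first kind would be an $\vec{x}$-tight set avoiding $e$ yet not contained in $A$, and a zero slack of the second kind would be a $\vec{y}$-tight set containing $e$ but not containing the minimal such set $B$. As $E$ is finite, choosing $\epsilon$ below the minimum of these finitely many positive numbers keeps both perturbed points feasible: the remaining inequality constraints are only relaxed by the perturbation, and $x(E)=\rho(E)$, $y(E)=\rho(E)$ are preserved because the perturbation conserves total mass.

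The combinatorial heart is then to show that such an $e'$ exists. Let $S^-:=\{f\in E:x_f<y_f\}$. Since $x(E)=y(E)=\rho(E)$ and $x_e>y_e$, we have $S^-\neq\emptyset$; since $B$ is $\vec{y}$-tight, $x(B)\le\rho(B)=y(B)$, and combined with $e\in B$ and $x_e>y_e$ this forces $S^-\cap B\neq\emptyset$. Suppose, for contradiction, that every element of $S^-\cap B$ lies in $A$, i.e. $S^-\cap B\subseteq A$, and put $B':=B\setminus A$. Then $e\in B'$ and $x_f\ge y_f$ for all $f\in B'$, hence $x(B')>y(B')$. On the other hand, apply submodularity to $A$ and $B$; substituting $\rho(A)=x(A)$ and $\rho(B)=y(B)$ on the left and bounding $\rho(A\cup B)\ge x(A\cup B)$, $\rho(A\cap B)\ge y(A\cap B)$ on the right gives $x(A)+y(B)\ge x(A\cup B)+y(A\cap B)$; using the modular identities $x(A\cup B)=x(A)+x(B')$ and $y(A\cap B)=y(B)-y(B')$ this collapses to $x(B')\le y(B')$, a contradiction. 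Hence some $e'\in S^-\cap B$ satisfies $e'\notin A$, which is the element we need.

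I expect the only genuine obstacle to be the last step, namely the choice of the two extremal tight sets: one must take $A$ \emph{maximal} on the $\vec{x}$-side and $B$ \emph{minimal} on the $\vec{y}$-side so that the submodular inequality degenerates to exactly $x(B')\le y(B')$ and collides with $x(B')>y(B')$; more naive choices (for instance the minimal $\vec{x}$-tight set containing a candidate $e'$) leave a real gap. The lattice property of tight sets and the $\epsilon$-bookkeeping in the reduction are routine.
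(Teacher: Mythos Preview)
Your argument is correct. The lattice of tight sets is the right tool, the choice of the \emph{maximal} $\vec{x}$-tight set $A$ avoiding $e$ and the \emph{minimal} $\vec{y}$-tight set $B$ containing $e$ is exactly what is needed, and the submodularity calculation collapsing to $x(B')\le y(B')$ is carried out cleanly. One small point worth making explicit in the reduction: the existence of $A$ and $B$ follows because $\emptyset$ is $\vec{x}$-tight with $e\notin\emptyset$ and $E$ is $\vec{y}$-tight with $e\in E$, so the relevant families are nonempty before you take the union/intersection.

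As for the comparison: the paper does \emph{not} prove this lemma. It is stated with attribution to Murota and used as a black box in the proof of Lemma~\ref{symmetricdifference}. Your proof is in fact the standard one (see, e.g., Fujishige or Murota): the dependence function $\mathrm{dep}(\vec{x},e')$ in that literature is precisely the complement of your set $A$ (with the roles of $e$ and $e'$ adjusted), and the saturation function $\mathrm{sat}(\vec{y})$ captures the information in your set $B$. So you have essentially reconstructed the textbook argument, which is more than the paper itself provides.
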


This exchange property will play an important role in the definition of bidirectional flow polymatroids.
Given a strategy $\vec{x}$ in the base polytope of polymatroid $(E,\rho)$, we are interested in the exchanges that can be made between $x_e$ and $x_{e'}$ for some resources in $e,e'\in E$. For that, we define a directed exchange graph $D(\vec{x})=(E,V)$, where the set of vertices equals the set of resources $E$. The edge set is $V:=\left\{{(e,e')}| \exists \; \epsilon>0 \text{ such that } \vec{x}+\epsilon({\chi_{e'}}-\chi_e)\in P_\rho \right\}$. We define exchange capacities $\hat{c}_{\vec{x}}(e,e')$ (following notation of Fujishige~\cite{Fujishige05}), which denotes the maximal amount of load that can be exchanged in $\vec{x}$ between resources $e$ and $e'$. More formally:
$$\hat{c}_{\vec{x}}(e,e'):= \max \{ \alpha |  \vec{x}+\alpha(\chi_{e'}-\chi_e)\in P_{\rho}  \}.$$

We use Lemma \ref{strongexchangeproperty} to prove the following:

\begin{lemma}  \label{symmetricdifference}Let $P_\rho$ be a polymatroid base polytope defined on $(E,\rho)$. For $\vec{x},\vec{y} \in P_\rho$, there exists a flow in $D(\vec{x})$ satisfying all supplies {and} demands, where a resource  $e$ with $x_e>y_e$ has supply of $x_e-y_e$ and $e$ with $x_e<y_e$ has a demand of $y_e-x_e$.
\end{lemma}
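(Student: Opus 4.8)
The plan is to build the desired flow by iterated application of the strong exchange property (Lemma~\ref{strongexchangeproperty}), phrasing the construction as a potential/induction argument on the $\ell_1$-distance $\|\vec x - \vec y\|_1$ (equivalently, on $\sum_{e : x_e > y_e}(x_e - y_e)$, since both $\vec x$ and $\vec y$ are bases, so $x(E) = y(E) = \rho(E)$ and the total supply equals the total demand). First I would observe that by the definition of the exchange graph $D(\vec x)$, whenever $\epsilon(\chi_{e'} - \chi_e)$ is a feasible move out of $\vec x$ with $\epsilon > 0$, the arc $(e, e')$ is present in $D(\vec x)$; so Lemma~\ref{strongexchangeproperty} guarantees that for any resource $e$ with $x_e > y_e$ there is a resource $e'$ with $x_{e'} < y_{e'}$ such that the arc $(e, e')$ lies in $D(\vec x)$ and carries positive exchange capacity $\hat c_{\vec x}(e, e') > 0$. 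This gives us a first "unit" of flow to route.

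The main technical point is that one cannot naively push flow along a single arc the full amount $x_e - y_e$, because that might violate the polytope constraints; and conversely the exchange capacity $\hat c_{\vec x}(e,e')$ of a single arc may be smaller than $\min\{x_e - y_e,\, y_{e'} - x_{e'}\}$, so a single arc need not suffice. The clean way around this is to fix a small step size $\alpha \le \hat c_{\vec x}(e,e')$ with $\alpha \le \min\{x_e - y_e,\, y_{e'} - x_{e'}\}$, route $\alpha$ units of flow on arc $(e,e')$, and replace $\vec x$ by $\vec x' := \vec x + \alpha(\chi_{e'} - \chi_e)$. By the choice of $\alpha$ we have $\vec x' \in P_\rho$, the sign pattern of $\vec x' - \vec y$ is unchanged except that $x'_e - y_e$ and $x'_{e'} - y_{e'}$ move strictly toward zero, and $\|\vec x' - \vec y\|_1 = \|\vec x - \vec y\|_1 - 2\alpha$. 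Iterating, the residual discrepancy strictly decreases, and by choosing each $\alpha$ maximal (so at least one of the two coordinates involved reaches equality with $\vec y$, or the arc's exchange capacity is exhausted) the process terminates after finitely many steps with $\vec x' = \vec y$. Summing the routed amounts arc by arc yields a flow; one then checks that at each vertex $e$ the net flow out equals exactly the initial $x_e - y_e$ (positive $=$ supply, negative $=$ demand, zero $=$ transshipment), because every step preserves flow conservation at untouched vertices and adjusts the two touched vertices consistently.

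One subtlety I would want to address carefully is that the flow must live in $D(\vec x)$, the exchange graph at the \emph{original} point $\vec x$, whereas the construction above naturally produces arcs in the exchange graphs $D(\vec x^{(k)})$ of the intermediate points. The key lemma to invoke here is a monotonicity property of exchangeability along the segment: if $\vec x^{(k)}$ is obtained from $\vec x$ by moves that only decrease coordinates where $\vec x > \vec y$ and only increase coordinates where $\vec x < \vec y$, then any exchange arc $(e,e')$ usable at $\vec x^{(k)}$ with $x_e^{(k)} \ge y_e$ and $x_{e'}^{(k)} \le y_{e'}$ is already usable at $\vec x$ — intuitively, moving "toward" $\vec y$ only relaxes the binding constraints that would block the $(e,e')$ exchange. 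This can be argued directly from submodularity of $\rho$: a tight set $U$ blocking the move at $\vec x^{(k)}$ must contain $e'$ but not $e$, and since the intermediate moves only pushed mass out of coordinates above $y$ and into coordinates below $y$, the slack $\rho(U) - x^{(k)}(U)$ is no smaller than $\rho(U) - x(U)$ for such $U$. I expect this monotonicity step to be the main obstacle; once it is in place, the iterated-exchange construction assembles the flow in $D(\vec x)$ and the bookkeeping of supplies and demands is routine.
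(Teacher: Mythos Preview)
Your overall plan---iterate the strong exchange property and record each elementary exchange as one unit of flow---is exactly the idea the paper uses. The difference is which vector you move: you push $\vec x$ toward $\vec y$, whereas the paper keeps $\vec x$ fixed and pushes $\vec y$ toward $\vec x$. This seemingly cosmetic choice is the crux. In the paper's version, at every step Lemma~\ref{strongexchangeproperty} is applied to the \emph{fixed} $\vec x$ and the current $\vec y$, so the exchange it produces is automatically feasible at $\vec x$; the arc is in $D(\vec x)$ by construction, and since $\hat c_{\vec x}(e,e')$ never changes and each ordered pair is visited at most once (this is where Fujishige's Theorem~3.27 is invoked), the accumulated flow respects the capacities of $D(\vec x)$ with no further argument.

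Your version, by contrast, produces arcs in the moving graph $D(\vec x^{(k)})$, and you then try to transfer them back to $D(\vec x)$ via a monotonicity claim. That claim is the gap. First, the inequality you write is in the wrong direction: to conclude that an arc usable at $\vec x^{(k)}$ is usable at $\vec x$ you would need the slack $\rho(U)-x(U)$ to be \emph{at least} the slack $\rho(U)-x^{(k)}(U)$ for every $U\ni e'$, $U\not\ni e$; but intermediate moves increase coordinates with $x_f<y_f$, and such $f$ may well lie in $U$ (indeed $e'$ itself does), so $x^{(k)}(U)$ can be larger than $x(U)$, not smaller. Second, even if you could show $(e,e')\in D(\vec x)$, you never verify that the \emph{sum} of all the $\alpha$'s you route on a given arc is bounded by the original capacity $\hat c_{\vec x}(e,e')$; your step sizes are chosen against the capacities at the intermediate points $\vec x^{(k)}$, which may exceed the capacity at $\vec x$.

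The simplest fix is to swap roles: leave $\vec x$ alone and transform $\vec y$ into $\vec x$ by strong exchanges. Then your iteration goes through verbatim, the ``monotonicity'' detour disappears, and the capacity check reduces to the observation that each arc is touched at most once.
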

\begin{proof}
Consider the following algorithm:
\begin{enumerate}
\item Let $f$ be the \emph{zero flow}, a flow where we send zero flow along all edges in $D(\vec{x})$.
\item If $\vec{x}=\vec{y}$, then stop and output flow $f$.
\item Choose any element $e \in E$ such that $x_e >y_e$.
\item Use Lemma~\ref{strongexchangeproperty} to find $e' \in E$ such that $x_{e'} < y_{e'}$ and $\epsilon>0$ with  $$\vec{x}+\epsilon(\chi_{e'}-\chi_e)\in P_\rho \text{ and } \vec{y}+\epsilon(\chi_e-\chi_{e'})\in P_\rho .$$
Put  $\alpha=\min \left\{\hat{c}_{\vec{x}}(e,e'),\hat{c}_{\vec{y}}(e',e), x_{e}-y_{e}, y_{e'}-x_{e'} \right\}$, define $\vec{y} \leftarrow  \vec{y}+{\alpha}(\chi_e-\chi_{e'})$ and add $\alpha$ flow to edge $(e,e')$ in flow $f$.
\item If $\alpha < x_e-y_e$, then go to step 4. Otherwise ($\alpha = x_e-y_e$), go to step 2.
\end{enumerate}

{Note that this algorithm is a slightly changed version of Fujishige~\cite[Theorem~3.27]{Fujishige05}. The only difference is that we do not change $y$ to $x$ with exchanges that only can be made on strategy $\vec{y}$ (which is proven in  Fujishige~\cite[Theorem~3.27]{Fujishige05}) but with exchanges that can be executed on both $\vec{x}$ and $\vec{y}$. As these exchanges always exists, the results by Fujishige~\cite[Theorem~3.27]{Fujishige05} are still valid for our algorithm. Hence,} this algorithm transforms $\vec{y}$ into $\vec{x}$ with at most $\lfloor |E|^2/4 \rfloor$ elementary transformations described in Lemma~\ref{strongexchangeproperty}, such that each component $y_e$ with $y_e<x_e$ monotonically increases and each component $y_e$ with  $y_e>x_e$ monotonically decreases. Therefore $f$ satisfies all supplies and demands as described in the lemma. Flow $f$ also satisfies all capacity constraints, as every pair of resources $(e,e')$ is considered at most once, and all exchanges can be done on $\vec{x}$. Hence $f_{(e,e')} \leq \hat{c}_{\vec{x}}(e,e')$, thus $f$ is a flow in $D(\vec{x})$ satisfying all supplies and demands.
\end{proof}

The flow $f$ mentioned in Lemma \ref{symmetricdifference} is a flow from the perspective of strategy $\vec{x}$ and therefore we call this a \emph{directed flow}. In the following we define a \emph{bidirectional flow}. Let $P_\rho$ again be a polymatroid base polytope on set $E$. For any $\vec{x},\vec{y} \in P_\rho$ define the capacitated graph $D(\vec{x},\vec{y})$ on vertices $E$. An edge $(e,e')$ exist if there is an $\epsilon>0$ such that $\vec{x}+\epsilon(\chi_e'-\chi_e)\in P_\rho$ and $\vec{y}+\epsilon(\chi_e-\chi_e')\in P_\rho$. For edges $(e,e')$ we define capacities $\hat{c}_{\vec{x},\vec{y}}(e,e')$ as follows:
$$\hat{c}_{\vec{x},\vec{y}}(e,e'):= \max \{ \alpha |  \vec{x}+\alpha(\chi_{e'}-\chi_e)\in P_\rho \text{ and } \vec{y}+\alpha(\chi_{e}-\chi_e')\in P_\rho  \}$$
A \emph{bidirectional flow} is a flow in $D(\vec{x},\vec{y})$ where every resource  $e$ with $x_e>y_e $ has supply of $x_{e}-y_{e}$ and every resource $e$ with $x_e<y_e$ has a demand of $y_{e}-x_{e}$. Such a flow might not {exist}. In that case we say that $\vec{x}$ and $\vec{y}$ are \emph{conflicting strategies}.

We are ready to define the class of \emph{bidirectional flow polymatroids}:

\begin{definition}[Bidirectional flow polymatroid]\label{def:bidirectional}
A polymatroid $(E,\rho)$ is called a \emph{bidirectional flow polymatroid} if for every pair of vectors $\vec{x},\vec{y}$ in {the} base polytope $P_\rho$,  there exists a bidirectional flow in $D(\vec{x},\vec{y})$.
\end{definition} 
We give a simple example of a bidirectional flow polymatroid.

{	
	\begin{example}
	We consider polymatroid $P_{\rho}$ defined by the graphic matroid on the graph depicted in Figure~\ref{flowexample}. In this polymatroid, a total load of $1$ is divided over the bases of the graphic matroid. Here, for any two strategies $\vec x$ and $\vec y$ there exists a bidirectional flow in $D(\vec{x},\vec{y})$. In particular, in Figure~\ref{flowexample2} we show the existance for a bidirectional flow for strategy $\vec x$ and $\vec y$ defined in Figure~\ref{flowexample}.
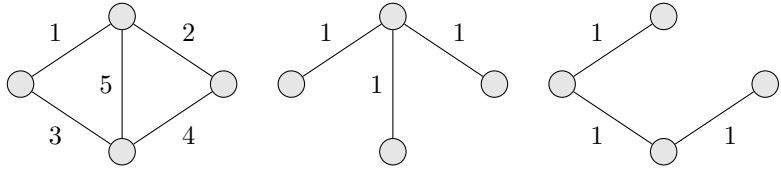
\begin{figure}[h!]
	\centering
	\begin{tikzpicture}[scale=0.45,main_node/.style={circle,fill=black!10,draw,minimum size=1em,inner sep=1pt}]
	\tikzset{edge/.style ={thick}}
	\tikzset{cut/.style ={thick,fill=red!100,bend right}}
	\tikzset{invis/.style ={fill=white!10}}
	\node[main_node] (1) at (0,0) {};
	\node[main_node] (2) at (3,2) {};
	\node[main_node] (3) at (3,-2) {};
	\node[main_node] (4) at (6,0) {};
	\path[] (1) edge [] node [above left] {1} (2);
	\path[] (2) edge [] node [above right] {2} (4);
	\path[] (1) edge [] node [below left] {3} (3);
	\path[] (3) edge [] node [below right] {4} (4);
	\path[] (2) edge [] node [left] {5} (3);	
	\node[main_node] (5) at (8,0) {};
	\node[main_node] (6) at (11,2) {};
	\node[main_node] (7) at (11,-2) {};
	\node[main_node] (8) at (14,0) {};
	\path[] (5) edge [] node [above left] {1} (6);
	\path[] (6) edge [] node [above right] {1} (8);
	\path[] (6) edge [] node [left] {1} (7);	
	\node[main_node] (9) at (16,0) {};
	\node[main_node] (10) at (19,2) {};
	\node[main_node] (11) at (19,-2) {};
	\node[main_node] (12) at (22,0) {};	
	\path[] (10) edge [] node [above left] {1} (9);
	\path[] (11) edge [] node [below left] {1} (9);
	\path[] (12) edge [] node [below right] {1} (11);
	\end{tikzpicture}
	\caption{Left: the original graph with numbered resources. Middle: Load distribution for strategy $\vec x$. Right: Load distribution for strategy $\vec y$.}
	\label{flowexample}
\end{figure} 
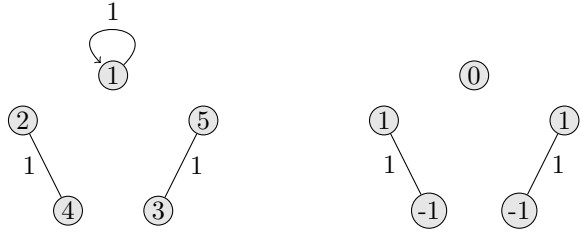
\begin{figure}[h!]
	\centering
	\begin{tikzpicture}[scale=0.6,main_node/.style={circle,fill=black!10,draw,minimum size=1em,inner sep=1pt}]
	\tikzset{edge/.style ={thick}}
	\tikzset{cut/.style ={thick,fill=red!100,bend right}}
	\tikzset{invis/.style ={fill=white!10}}
	\node[main_node] (4) at (0,0) {4};
	\node[main_node] (3) at (2,0) {3};
	\node[main_node] (5) at (3,2) {5};
	\node[main_node] (1) at (1,3) {1};
	\node[main_node] (2) at (-1,2) {2};
	\path[loop] (1) edge [] node [above] {1} (1);
	\path[] (3) edge [] node [right] {1} (5);
	\path[] (2) edge [] node [left] {1} (4);
	\node[main_node] (9) at (8,0) {-1};
	\node[main_node] (8) at (10,0) {-1};
	\node[main_node] (10) at (11,2) {1};
	\node[main_node] (6) at (9,3) {0};
	\node[main_node] (7) at (7,2) {1};
	\path[] (8) edge [] node [right] {1} (10);
	\path[] (7) edge [] node [left] {1} (9);
	\end{tikzpicture}
	\caption{Left: Graph $D(\vec{x},\vec{y})$ with corresponding capacities.  Right: the bidirectional flow in $D(\vec{x},\vec{y})$, including supplies and demands.}
	\label{flowexample2}
\end{figure} 
\end{example}
}

\section{A Uniqueness Result}\label{sec:main}
In this section we prove that when the strategy space of every player is the base polytope of a bidirectional flow polymatroid, equilibria are unique. We denote the \emph{marginal cost} of player $i$ on  resource $e\in E$ by $\mu_{i,e}(\vec{x}) = c_{i,e}(x_e)+x_{i,e} c_{i,e}'(x_e).$

An equilibrium condition for polymatroid congestion games, a result that follows from~\cite[Lemma 1]{Harks:stack2011}, is as follows:
\begin{lemma}
\label{equalresource}
Let  $\vec{x}$ be a Nash equilibrium in a polymatroid congestion game.  If $x_{i,e}>0$, then for all $e' \in E$ for which  there is an $\epsilon>0$ such that
$\vec{x_i}+\epsilon(\chi_{e'}-\chi_e) \in P_{\rho_i}$, we have $\mu_{i,e}(\vec{x}) \leq \mu_{i,e'}(\vec{x})$.
\end{lemma}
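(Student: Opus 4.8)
The statement is the standard characterization of Nash equilibria via marginal costs, restricted to the directions of deviation that are feasible inside the polymatroid base polytope $P_{\rho_i}$. The plan is to argue by contradiction: suppose $\vec{x}$ is a Nash equilibrium, $x_{i,e}>0$, and there is a feasible exchange direction $\chi_{e'}-\chi_e$ for player $i$ (i.e.\ $\vec{x}_i+\epsilon(\chi_{e'}-\chi_e)\in P_{\rho_i}$ for some $\epsilon>0$) with $\mu_{i,e}(\vec{x})>\mu_{i,e'}(\vec{x})$. I would then show that player $i$ can strictly decrease her cost by shifting a small amount $\delta>0$ of load from $e$ to $e'$ along this direction, contradicting the equilibrium property.

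The key computation is to differentiate player $i$'s cost $\pi_i(\vec{x})=\sum_{f\in E}c_{i,f}(x_f)x_{i,f}$ along the ray $t\mapsto \vec{x}_i+t(\chi_{e'}-\chi_e)$, keeping the other players fixed. Only the terms for $f=e$ and $f=e'$ change, and since both $x_f$ and $x_{i,f}$ change by the same amount (player $i$'s own load moves and with it the total load, as the other players' loads are fixed), the derivative at $t=0$ is exactly $\bigl(c_{i,e'}(x_{e'})+x_{i,e'}c_{i,e'}'(x_{e'})\bigr)-\bigl(c_{i,e}(x_e)+x_{i,e}c_{i,e}'(x_e)\bigr)=\mu_{i,e'}(\vec{x})-\mu_{i,e}(\vec{x})<0$. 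Hence for sufficiently small $\delta>0$ the deviation is feasible (it stays in $P_{\rho_i}$ because $x_{i,e}>0$ guarantees we can decrease the $e$-coordinate, and the exchange property guarantees the point remains in the polytope for small enough step, shrinking $\epsilon$ if needed) and strictly profitable. This is the desired contradiction. Since this is essentially the first-order optimality condition for minimizing the convex function $\pi_i(\cdot,\vec{x}_{-i})$ over the polytope $P_{\rho_i}$, one could alternatively cite \cite[Lemma 1]{Harks:stack2011} directly, as the excerpt does; I would present the short self-contained derivative argument for completeness.

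The only mildly delicate point — and the main obstacle, though a minor one — is feasibility of the deviation: we need that shifting $\delta$ units from $e$ to $e'$ keeps $\vec{x}_i$ in $P_{\rho_i}$. This is where both hypotheses are used: $x_{i,e}>0$ ensures the $e$-coordinate does not go negative for small $\delta$, and the assumed existence of $\epsilon>0$ with $\vec{x}_i+\epsilon(\chi_{e'}-\chi_e)\in P_{\rho_i}$ together with convexity of $P_{\rho_i}$ gives $\vec{x}_i+\delta(\chi_{e'}-\chi_e)\in P_{\rho_i}$ for all $\delta\in[0,\epsilon]$. So one picks $\delta=\min\{\epsilon,\delta_0\}$ where $\delta_0$ is small enough that the cost is still strictly decreasing on $[0,\delta_0]$ (possible by continuity of the derivative, using differentiability and convexity of the $c_{i,f}$). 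Everything else is a routine calculation.
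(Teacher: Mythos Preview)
Your argument is correct and is precisely the standard first-order optimality derivation that underlies \cite[Lemma~1]{Harks:stack2011}; the paper itself does not spell out a proof but only cites that reference, so your self-contained derivative computation is essentially an unpacking of the same approach. One tiny remark: the hypothesis $x_{i,e}>0$ is in fact already implied by the existence of $\epsilon>0$ with $\vec{x}_i+\epsilon(\chi_{e'}-\chi_e)\in P_{\rho_i}\subseteq\R_+^{E}$, so you need not invoke it separately for feasibility.
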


In the rest of this section we will prove the following theorem:

\begin{theorem} \label{uniquenessresult}
If for a polymatroid congestion game, the strategy space for every player is the base polytope of a bidirectional flow polymatroid, then the equilibria of this game are unique.
\end{theorem}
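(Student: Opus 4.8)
The plan is to argue by contradiction. Suppose $\vec x$ and $\vec y$ are two Nash equilibria with $x_{i,e}\neq y_{i,e}$ for some player $i$ and resource $e$; I work throughout with the induced load vectors $\vec x_i=(x_{i,e})_{e\in E}\in P_{\rho_i}$ and $\vec y_i\in P_{\rho_i}$. Two ingredients are combined. The first is first-order optimality: since $\pi_i$ is convex and differentiable in player $i$'s load over the convex set $P_{\rho_i}$ and its partial gradient there equals $\mu_i(\vec x):=(\mu_{i,e}(\vec x))_{e\in E}$, best-responding means $\langle \mu_i(\vec x),\vec z_i-\vec x_i\rangle\ge 0$ for every $\vec z_i\in P_{\rho_i}$; taking $\vec z_i=\vec y_i$, arguing symmetrically at $\vec y$, and summing over $i$ gives $\sum_{i\in N}\langle \mu_i(\vec x)-\mu_i(\vec y),\vec x_i-\vec y_i\rangle\le 0$. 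The second ingredient is the \emph{bidirectional} exchange structure, which is exactly what is needed to couple the exchange possibilities at $\vec x$ and at $\vec y$.

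For each player $i$, use that $P_{\rho_i}$ is the base polytope of a bidirectional flow polymatroid (Def.~\ref{def:bidirectional}) to fix a bidirectional flow $f^i$ in $D(\vec x_i,\vec y_i)$ with supply $x_{i,e}-y_{i,e}$ at every surplus resource ($x_{i,e}>y_{i,e}$) and demand $y_{i,e}-x_{i,e}$ at every deficit resource, and decompose $f^i$ into simple surplus-to-deficit paths (any cyclic part is discarded). Whenever an edge $(e,e')$ carries positive flow, the definition of $D(\vec x_i,\vec y_i)$ supplies an $\epsilon>0$ with $\vec x_i+\epsilon(\chi_{e'}-\chi_e)\in P_{\rho_i}$ (so in particular $x_{i,e}>0$) and $\vec y_i+\epsilon(\chi_e-\chi_{e'})\in P_{\rho_i}$ (so $y_{i,e'}>0$); Lemma~\ref{equalresource} applied at $\vec x$ and at $\vec y$ then yields $\mu_{i,e}(\vec x)\le\mu_{i,e'}(\vec x)$ and $\mu_{i,e'}(\vec y)\le\mu_{i,e}(\vec y)$. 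Chaining these along a path $p$ from a surplus $s_p$ to a deficit $t_p$ gives $\mu_{i,s_p}(\vec x)\le\mu_{i,t_p}(\vec x)$ and $\mu_{i,t_p}(\vec y)\le\mu_{i,s_p}(\vec y)$; equivalently, $e\mapsto\mu_{i,e}(\vec x)-\mu_{i,e}(\vec y)$ is nondecreasing along every edge of $f^i$.

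To finish, I use the identity
\[ \mu_{i,e}(\vec x)-\mu_{i,e}(\vec y)=\Big(c_{i,e}(x_e)-c_{i,e}(y_e)+y_{i,e}\big(c_{i,e}'(x_e)-c_{i,e}'(y_e)\big)\Big)+c_{i,e}'(x_e)\,(x_{i,e}-y_{i,e}), \]
in which the parenthesized term has the same sign as $x_e-y_e$, because $c_{i,e}$ is increasing and convex. The first step is to prove \emph{uniqueness of the induced loads}, $x_e=y_e$ for all $e$: here one feeds the per-edge monotonicity from the previous paragraph, the sign information above, the path decompositions of the $f^i$, and conservation of flow inside each connected piece of $\bigcup_i f^i$ into a combinatorial argument that rules out $\{e:x_e>y_e\}$ and $\{e:x_e<y_e\}$ being simultaneously nonempty (they carry equal total mass since $x_i(E)=y_i(E)=\rho_i(E)$ for each $i$). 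Once $x_e=y_e$ for every $e$, the parenthesized term vanishes, so $\mu_{i,e}(\vec x)-\mu_{i,e}(\vec y)=c_{i,e}'(x_e)(x_{i,e}-y_{i,e})$ and hence $\langle\mu_i(\vec x)-\mu_i(\vec y),\vec x_i-\vec y_i\rangle=\sum_{e\in E}c_{i,e}'(x_e)(x_{i,e}-y_{i,e})^2\ge 0$, with equality iff $\vec x_i=\vec y_i$; summing over $i$ and comparing with $\sum_{i\in N}\langle\mu_i(\vec x)-\mu_i(\vec y),\vec x_i-\vec y_i\rangle\le 0$ forces $\vec x_i=\vec y_i$ for every player, contradicting the choice of $\vec x,\vec y$.

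The main obstacle is the load-uniqueness step. The variational inequality by itself gives only $\sum_i\langle\mu_i(\vec x)-\mu_i(\vec y),\vec x_i-\vec y_i\rangle\le 0$, which is insufficient (and indeed the statement fails for general non-matroidal set systems). The only extra leverage is the per-edge monotonicity of $e\mapsto\mu_{i,e}(\vec x)-\mu_{i,e}(\vec y)$ along each bidirectional flow $f^i$, and converting this purely local, per-player information into the global conclusion $x_e\equiv y_e$ is the delicate core of the proof — it is precisely the place where one needs ``bidirectional'' rather than merely ``polymatroid'' (for a general polymatroid, $\vec x_i$ and $\vec y_i$ may be conflicting strategies and no coupled flow $f^i$ exists).
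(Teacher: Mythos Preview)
Your proposal has a genuine gap at precisely the point you flag yourself: the load-uniqueness step. You assert that ``a combinatorial argument'' rules out $E^+=\{e:x_e>y_e\}$ and $E^-=\{e:x_e<y_e\}$ both being nonempty, listing as ingredients the per-edge monotonicity of $\mu_{i,e}(\vec x)-\mu_{i,e}(\vec y)$, the sign decomposition, the path decompositions of the $f^i$, and flow conservation --- but you never assemble these into an argument. This is not a detail to be filled in later; it is the heart of the proof, and your variational inequality $\sum_i\langle\mu_i(\vec x)-\mu_i(\vec y),\vec x_i-\vec y_i\rangle\le 0$ gives no traction on it, because the per-player differences $x_{i,e}-y_{i,e}$ and the aggregate differences $x_e-y_e$ need not share a sign.

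The paper's proof avoids your two-stage structure entirely; it never first proves $x_e=y_e$. The idea you are missing is an \emph{alignment} of local and global overloads. If $E^+\neq\emptyset$, a one-line counting argument shows there is a player $i$ with $\sum_{e\in E^+}(x_{i,e}-y_{i,e})>0$. For that player, attach a super-source $s_i$ and super-sink $t_i$ to the bidirectional flow $f^i$ and look at the cut separating $\{s_i\}\cup E^+$ from the rest: strictly more flow enters $E^+$ from $s_i$ than leaves $E^+$ to $t_i$, so some $s_i$--$t_i$ path in the decomposition must start at a vertex $e_1\in E^{i,+}\cap E^+$ and end at a vertex $e_k\in E^{i,-}\cap(E^-\cup E^=)$. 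That single path already yields the contradiction: since $e_1$ is both locally \emph{and} globally overloaded, $\mu_{i,e_1}(\vec y)<\mu_{i,e_1}(\vec x)$ strictly; since $e_k$ is locally and globally underloaded-or-equal, $\mu_{i,e_k}(\vec x)\le\mu_{i,e_k}(\vec y)$; combined with your own monotonicity along the path in both directions this gives
\[
\mu_{i,e_k}(\vec x)\le\mu_{i,e_k}(\vec y)\le\mu_{i,e_1}(\vec y)<\mu_{i,e_1}(\vec x)\le\mu_{i,e_k}(\vec x).
\]
(When $E^+=\emptyset$, any path of any nontrivial $f^i$ works.) The variational inequality is never needed.
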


From now on we assume $\vec{x}=(\vec{x}_i)_{i\in N}$ and $\vec{y}=(\vec{y}_i)_{i \in N}$ are strategy profiles, where strategies $\vec{x}_i$ and $\vec{y}_i$ are taken from the  base polytope  $P_{\rho_i}$ of a player-specific bidirectional flow polymatroid. Before we  prove Theorem~\ref{uniquenessresult}, we first  introduce some new notation. We define $E^+ = \{e \in E| x_e > y_e \}$ and $E^- = \{e \in E| x_e < y_e \}$ as the sets of \emph{globally} overloaded and underloaded resources. Define $E^==\{e \in E| x_e=y_e \}$ as the set of resources on which the total load does not change.  In the same way we define player-specific sets of \emph{locally} underloaded and overloaded resources $E^{i,+} = \{e \in E| x_{i,e} > y_{i,e} \}$ and $E^{i,-} = \{e \in E| x_{i,e} < y_{i,e} \}$. We also introduce four player sets:
\begin{eqnarray*}
N^+_> = \{i \in N| \sum_{e \in E^+} x_{i,e} - y_{i,e} > 0 \}, \; \; \; \; \; \;    N^-_>  =  \{i \in N| \sum_{e \in E^- {\cup E^=}} x_{i,e} - y_{i,e} > 0 \}, 		\\
N^+_< 	= \{i \in N| \sum_{e \in E^+} x_{i,e} - y_{i,e} \leq 0 \}, \; \; \; \; \; \;	 N^-_<	= \{i \in N| \sum_{e \in E^-{\cup E^=}} x_{i,e} - y_{i,e} \leq 0 \}.\\
\end{eqnarray*}
We can distinguish between two cases. Either $E = E^=$, thus $x_e=y_e$ for all resources $e \in E$, or $E \neq E^=$, which implies that $E^+$ and $E^-$ are non-empty. 

\begin{lemma} \label{nplus}If $E \neq E^=$, then $N^+_> \neq \emptyset$. \end{lemma}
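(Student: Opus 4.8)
The claim is that if the total-load profiles $\vec x$ and $\vec y$ differ somewhere, then at least one player strictly increases its aggregate load on the globally overloaded set $E^+$ when moving from $\vec y$ to $\vec x$. The plan is to argue by a simple conservation/counting identity. Since $E\neq E^=$, the set $E^+$ is nonempty and $\sum_{e\in E^+}(x_e - y_e) > 0$ (each term is strictly positive). Now expand the total load on each resource over the players: $x_e = \sum_{i\in N} x_{i,e}$ and $y_e = \sum_{i\in N} y_{i,e}$. Summing over $e\in E^+$ and swapping the order of summation gives
\[
0 \;<\; \sum_{e\in E^+}(x_e - y_e) \;=\; \sum_{i\in N}\Bigl(\sum_{e\in E^+}(x_{i,e} - y_{i,e})\Bigr).
\]
A sum of reals is positive only if at least one summand is positive, so there exists a player $i\in N$ with $\sum_{e\in E^+}(x_{i,e}-y_{i,e}) > 0$; by definition this is exactly the statement that $i\in N^+_>$, hence $N^+_>\neq\emptyset$.

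Let me spell out the two ingredients a bit more carefully, since both are needed. First, $E^+$ nonempty: we are in the case $E\neq E^=$, and since $E = E^= \cup E^+ \cup E^-$ is a disjoint union, at least one of $E^+, E^-$ is nonempty; but $\sum_{e\in E}(x_e - y_e) = \sum_{i\in N} d_i - \sum_{i\in N} d_i = 0$ because each player's total load is fixed at $d_i$ (the base polytope constraint $x_i(E) = \rho_i(E) = d_i$, likewise for $\vec y$), so a nonempty $E^-$ forces a nonempty $E^+$ and vice versa — the positive and negative parts must balance. Second, strict positivity of $\sum_{e\in E^+}(x_e-y_e)$: every $e\in E^+$ contributes $x_e - y_e > 0$ by definition of $E^+$, and $E^+\neq\emptyset$, so the sum over $E^+$ is strictly positive. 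These two facts combined with the swap of summation order above yield the conclusion.

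There is no real obstacle here; the only thing to be careful about is not to confuse $N^+_>$ (defined via the sum over $E^+$) with the naive "some player overloads some resource in $E^+$" — the lemma is about the aggregate deviation of a single player over the whole set $E^+$, which is precisely what the double-sum identity delivers. The polymatroid/bidirectional structure plays no role in this particular lemma; it is a bookkeeping step that sets up which player to push flow through in the subsequent uniqueness argument.
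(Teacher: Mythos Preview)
Your proof is correct and follows essentially the same approach as the paper: both argue that $E^+\neq\emptyset$ forces $\sum_{e\in E^+}(x_e-y_e)>0$, expand this as a double sum over players, and conclude that some player's contribution must be strictly positive. The paper phrases the last step by partitioning $N$ into $N^+_>$ and $N^+_<$ and noting the second group contributes non-positively, whereas you invoke the equivalent pigeonhole observation directly; you also supply the balancing argument for $E^+\neq\emptyset$ that the paper states without proof just before the lemma.
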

\begin{proof}
Every player distributes the same weight over the resources in $\vec{x}_i$ and $\vec{y}_i$, thus $\sum_{e \in E} x_{i,e} - y_{i,e} = 0$ and $N^+_> = N^-_<$ and $N^+_< = N^-_>$. As $E^+ \neq \emptyset$ we have:
$$
0<\sum_{e \in E^+} x_e-y_e= \sum_{i\in N^+_>} \sum_{e \in E^+}  x_{i,e}-y_{i,e} + \sum_{i\in N^+_{{<}}}\sum_{e \in E^+}  x_{i,e}-y_{i,e}.$$
Note that the first term in the last expression is non-negative and the second one is non-positive. As the whole equation should be positive, we need that this first term is strictly positive and therefore $N^+_> \neq \emptyset$.
\end{proof}

For each player $i$ we create a graph $G(\vec{x}_i,\vec{y}_i)$ from graph $D(\vec{x_i},\vec{y_i})$ by adding a super-source  $s_i$ and a super-sink $t_i$ to $D(\vec{x_i},\vec{y_i})$. We add edges from $s_i$ to $e \in E^{i,+}$ with capacity $x_{i,e}-y_{i,e}$ and edges from $e\in E^{i,-}$ to $t_i$ with capacity $y_{i,e}-x_{i,e} $. Graph $G(\vec{x}_i,\vec{y}_i) {=(V_G,E_G)}$ is visualized in Figure \ref{graphdpath}.
 
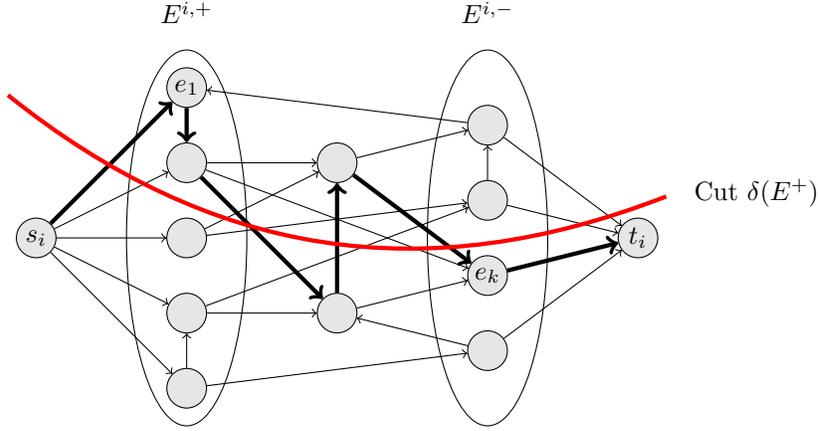
\begin{figure}
\centering
\begin{tikzpicture}[main_node/.style={circle,fill=black!10,draw,minimum size=1.5em,inner sep=1pt}]
\tikzset{edge/.style ={->}}
\tikzset{edgep/.style ={ultra thick,fill=red!100,->}}
\tikzset{edged/.style ={->,dashed}}
\tikzset{edgedp/.style ={ultra thick,->,dashed}}
\tikzset{cut/.style ={thick,fill=red!100,bend right}}
\tikzset{invis/.style ={fill=white!10}}
   \node[main_node] (1) at (0,1) {$s_i$};
   
    \node[main_node] (2) at (2, 3)  {$e_1$};
    \node[main_node] (3) at (2, 2) {};
    \node[main_node] (4) at (2, 1) {};
    \node[main_node] (5) at (2, 0) {};
    \node[main_node] (6) at (2, -1) {};

    \node[main_node] (9) at (6, 2.5)  {};
    \node[main_node] (10) at (6, 1.5) {};
    \node[main_node] (11) at (6, 0.5) {$e_k$};
    \node[main_node] (12) at (6, -0.5) {};

	\node[main_node] (13) at (4, 2) {};
	\node[main_node] (14) at (4, 0) {};

    \node[main_node] (16) at (8, 1) {$t_i$};

\node[draw=none] at (2,4) {$E^{i,+}$};
\node[draw=none] at (6,4) {$E^{i,-}$};

 \draw (2,1) ellipse (0.8cm and 2.5cm);
  \draw (6,1) ellipse (0.8cm and 2.5cm);

 \draw[edge] (9) -- (2);
 \draw[edgep] (2) -- (3);
 \draw[edgep] (3) -- (14);
 \draw[edgep] (14) -- (13);
 \draw[edgep] (13) -- (11);
 \draw[edge] (3) -- (13);
 \draw[edge] (3) -- (11);
 \draw[edge] (4) -- (10);
 \draw[edge] (4) -- (13);
 \draw[edge] (5) -- (10);
 \draw[edge] (5) -- (14);
 \draw[edge] (6) -- (5);
 \draw[edge] (6) -- (12);
 \draw[edge] (13) -- (9);
 \draw[edge] (14) -- (11);
 \draw[edge] (10) -- (9);
 \draw[edge] (12) -- (14);

    \draw[edgep] (1) -- (2);
    \draw[edge] (1) -- (3);
    \draw[edge] (1) -- (4);
    \draw[edge] (1) -- (5);
    \draw[edge] (1) -- (6);

    \draw[edge] (9) -- (16);
    \draw[edge] (10) -- (16);
    \draw[edgep] (11) -- (16);
    \draw[edge] (12) -- (16);

\node[draw=none] (lcut) at (-0.5,3) {};
\node[draw=none] (rcut) at (8.5,1.6) [label=right: Cut $\delta(E^+)$]{};
\path[] (lcut) edge [ultra thick,red,bend right] (rcut);

\end{tikzpicture}
\caption{Visualization of graph $G(\vec{x}_i,\vec{y}_i)$ and cut $\delta(E^+)$ used in the proof of Lemma \ref{graphg2path}.}
\label{graphdpath}
  
\end{figure}

Recall that strategies $\vec{x_i}$ and $\vec{y_i}$ are both chosen from the base polytope of a bidirectional flow polymatroid. Therefore there exists a flow $f_i$ in $D(\vec{x_i},\vec{y_i})$ where every resource  $e \in E^{i,+}$ has a supply of $x_{i,e}-y_{i,e}$ and $e \in E^{i,-}$ a demand of $y_{i,e}-x_{i,e}$. Using $f_i$ we define a flow $f'_i$ in $G(\vec{x}_i,\vec{y}_i)$ as follows:

\begin{equation}
\label{flowfprime}
f_i'(e,e')= 
\begin{cases}
    x_{i,e}-y_{i,e},              & \text{if } e=s_i \text{ and } e' \in E^{i,+},\\
    y_{i,e}-x_{i,e},              & \text{if } e\in E^{i,-} \text{ and } e' =t_i, \\
    f_i(e,e'),& \text{otherwise.}\\
\end{cases}
\end{equation}

\begin{lemma} \label{graphg2path}
There exists a player $i$ and a path $(s_i,e_1, \dots, e_k,t_i)$ in $G(\vec{x}_i,\vec{y}_i)$ such that $e_1 \in E^{i,+} \cap (E^+ \cup E^=)$ and  $e_k \in E^{i,-} \cap (E^- \cup E^=)$.
\end{lemma}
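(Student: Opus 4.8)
The plan is to find a player $i$ and a source-to-sink path in $G(\vec{x}_i,\vec{y}_i)$ whose first vertex lies in $E^+ \cup E^=$ and whose last vertex lies in $E^- \cup E^=$. By Lemma~\ref{nplus} we may pick a player $i \in N^+_>$, so that $\sum_{e\in E^+}(x_{i,e}-y_{i,e}) > 0$. I want to use this player (or a closely related one) and exploit the structure of the flow $f_i'$ in $G(\vec{x}_i,\vec{y}_i)$ defined in~\eqref{flowfprime}.

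First I would observe that $f_i'$ is a feasible $s_i$-$t_i$ flow of value $\sum_{e\in E^{i,+}}(x_{i,e}-y_{i,e}) = \sum_{e\in E^{i,-}}(y_{i,e}-x_{i,e}) > 0$ (this last equality because each player redistributes the same total weight $d_i$, so $\sum_e (x_{i,e}-y_{i,e})=0$). By flow decomposition, $f_i'$ decomposes into $s_i$-$t_i$ paths (and possibly cycles); each such path has the form $(s_i,e_1,\dots,e_k,t_i)$ with $e_1\in E^{i,+}$ and $e_k\in E^{i,-}$. So the content of the lemma is the extra requirement that we can choose the path and the player so that additionally $e_1\in E^+\cup E^=$ and $e_k\in E^-\cup E^=$ — i.e., the first vertex is not a globally underloaded resource and the last vertex is not a globally overloaded resource. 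The natural device is a cut argument: consider the cut $\delta(E^+)$ in $G(\vec{x}_i,\vec{y}_i)$ pictured in Figure~\ref{graphdpath}, separating $s_i\cup E^+$ from $t_i$ together with the rest, and argue about net flow across it.

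The key step is a counting/aggregation argument over all players in $N^+_>$ (or over all players). Summing the net flow of $f_i'$ across the cut $\delta(E^+)$ over all relevant players, the total is $\sum_i \sum_{e\in E^+}(x_{i,e}-y_{i,e})$ restricted appropriately, which equals $\sum_{e\in E^+}(x_e-y_e) > 0$ by definition of $E^+$. On the other hand, flow can leave the $s_i$-side of this cut only along edges $(e,e')$ with $e\in E^+$ and $e'\notin E^+$, i.e.\ $e'\in E^-\cup E^=$; since the total across the cut is strictly positive for at least one player, that player has a path-decomposition piece that actually crosses from $E^+$-side to the $t_i$-side, hence uses some edge into $E^-\cup E^=$. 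Combining this crossing structure at both ends (a symmetric argument with the cut $\delta(E^-)$ near $t_i$, or tracking the path from $s_i$ through its first exit of $E^+$ and into $E^-$) yields a path with $e_1\in E^{i,+}\cap(E^+\cup E^=)$ and $e_k\in E^{i,-}\cap(E^-\cup E^=)$. Here one uses that edges of $D(\vec{x}_i,\vec{y}_i)$ internal to $E^=$ and the super-source/super-sink edges respect these membership constraints.

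The main obstacle I anticipate is making the two-sided conclusion simultaneously: a single path-decomposition piece of $f_i'$ gives $e_1\in E^{i,+}$ and $e_k\in E^{i,-}$ automatically, but there is no reason a priori that the \emph{same} piece has $e_1\notin E^-$ \emph{and} $e_k\notin E^+$. Resolving this requires choosing the player and the path carefully — e.g.\ first using the $\delta(E^+)$ cut argument to locate, within some player's flow, a sub-path segment from a vertex in $E^{i,+}\cap E^+$ that first leaves $E^+$ and thereby enters $E^-\cup E^=$, then following the remaining flow conservation to reach $t_i$ while controlling the last vertex; or alternatively aggregating over $N^+_>$ and using that $\sum_{e\in E^+}(x_e-y_e)>0$ together with $\sum_{e\in E^-}(x_e-y_e)<0$ to force the existence of one player whose flow simultaneously respects both endpoint conditions. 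Bounding this interplay between the global sets $E^+,E^-,E^=$ and the player-specific sets $E^{i,+},E^{i,-}$, via the cut in Figure~\ref{graphdpath}, is the crux of the argument.
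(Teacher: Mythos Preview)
Your instinct to use the cut $\delta(E^+)$ for a player $i\in N^+_>$ together with the path decomposition of $f'_i$ is exactly the paper's approach. However, the ``main obstacle'' you anticipate is illusory, and the detours you propose to get around it (aggregating over all players, invoking a second cut $\delta(E^-)$) are unnecessary and would themselves create the genuine difficulty of stitching two separate conclusions into a single path.

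The point you are missing is that the \emph{single} cut already delivers \emph{both} endpoint conditions at once. For a fixed $i\in N^+_>$, the flow from $s_i$ into $E^+$ equals $\sum_{e\in E^+\cap E^{i,+}}(x_{i,e}-y_{i,e})$, and the flow from $E^+$ into $t_i$ equals $\sum_{e\in E^+\cap E^{i,-}}(y_{i,e}-x_{i,e})$; their difference is $\sum_{e\in E^+}(x_{i,e}-y_{i,e})>0$ by the very definition of $N^+_>$ (no aggregation over players is needed). Now classify the $s_i$--$t_i$ paths in the decomposition of $f'_i$ according to whether $e_1\in E^+$ and whether $e_k\in E^+$: the inequality above says precisely that paths with $e_1\in E^+$ and $e_k\notin E^+$ carry strictly more weight than paths with $e_1\notin E^+$ and $e_k\in E^+$. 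Hence some path has $e_1\in E^{i,+}\cap E^+$ \emph{and} $e_k\in E^{i,-}\setminus E^+=E^{i,-}\cap(E^-\cup E^=)$ simultaneously. This is the paper's argument, phrased there as ``more load enters $E^+$ from $s_i$ than leaves $E^+$ to $t_i$''.

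You also omit the case $E=E^{=}$. There $E^+=\emptyset$, so Lemma~\ref{nplus} does not apply and $N^+_>$ is empty; your first sentence (``By Lemma~\ref{nplus} we may pick $i\in N^+_>$'') fails. The paper handles this separately and trivially: pick any player $i$ with $\vec{x}_i\neq\vec{y}_i$, and then every path in the decomposition of $f'_i$ already has $e_1\in E^{i,+}\cap E^{=}$ and $e_k\in E^{i,-}\cap E^{=}$.
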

\begin{proof}
If $E \neq E^=$,  then using Lemma~\ref{nplus} we have that $N^+_> \neq \emptyset$, and we pick a player $i \in N^+_>$. Flow $f'_i$ can be decomposed into flow carrying $s_i$-$t_i$ paths, and we will show that there exists a path in this path decomposition that goes from $s_i$ to a vertex $e_1 \in E^{i,+} \cap E^+$, and, after visiting possibly other vertices, finally goes through a vertex $e_k \in E^{i,-} \cap (E^- \cup E^=)$ to $t_i$.
To see this consider the cut $\delta(E^+):={\{ (u,v) \in E_G \mid u \in E^+ \text{ and } v \notin E^+, \text{ or } u \notin E^+ \text{ and } v \in E^+ ) \}}$, as visualized in Figure~\ref{graphdpath}. Recall that $i\in N^+_>$, hence, $\sum_{e\in E^+}x_{i,e}-y_{i,e} >0$. Thus, in $f'_i$ more load enters $E^+$ from $s_i$, than leaves $E^+$ to $t_i$. This implies that in the flow decomposition of $f'_i$ there must be a path that goes from $s_i$ to a vertex $e_1 \in E^{i,+} \cap E^+$, crosses cut $\delta(E^+)$ an odd number of times to a vertex  $e_k \in E^{i,-} \cap (E^- \cup E^=)$ before ending in $t_i$. As this is a flow-carrying path in $f'_i$, it exists in $G(\vec{x}_i,\vec{y}_i)$. 

If $E = E^=$, pick any player $i$ for which there exists a resource $e$ with $x_{i,e}\neq y_{i,e}$ and look at the path decomposition of $f'_i$. Every path  $(s_i,e_1, \dots, e_k,t_i)$ in this decomposition is a path such that $e_1 \in E^{i,+}$ and  $e_k \in {E^{i,-}}$.
 As $E=E^=$, it also holds that $e_1 \in E^{i,+} \cap E^=$ and  $e_k \in E^{i,-} \cap E^=$. As this is a flow-carrying path in $f'_i$, it exists in $G(\vec{x}_i,\vec{y}_i)$
\end{proof}

\begin{proof}[Proof of Theorem \ref{uniquenessresult}]
Assume $\vec{x}$ and $\vec{y}$ are both Nash equilibria. Using Lemma~\ref{graphg2path} we find a path $(s_i,e_1, \dots, e_k,t_i)$ in $G(\vec{x}_i,\vec{y}_i)$ such that $e_1 \in E^{i,+} \cap (E^+ \cup E^=)$ and  $e_k \in E^{i,-} \cap (E^-  \cup E^=)$. Since every edge $(e_j,e_{j+1})$ exists in $G(\vec{x}_i,\vec{y}_i)$,  for all $j \in \{1, \dots, k-1 \}$ we get {for sufficiently small $\epsilon >0$}:
$$\vec{x_i}+\epsilon(\chi_{e_{j+1}}-\chi_{e_j})\in P_{\rho_i} \text{ and } \vec{y_i}+\epsilon(\chi_{e_j}-\chi_{e_{j+1}})\in P_{\rho_i}.$$

Using Lemma~\ref{equalresource} we obtain for $\vec{x}$:
\begin{equation}\label{equation1} \mu_{i,e_1}(\vec{x}) \leq \mu_{i,e_2}(\vec{x}) \leq \dots \leq  \mu_{i,e_k}(\vec{x}), \end{equation}
and similarly for $\vec{y}$:
\begin{equation}\label{equation2}\mu_{i,e_k}(\vec{y}) \leq \mu_{i,e_{k-1}}(\vec{y}) \leq \dots \leq \mu_{i,e_1}(\vec{y}). \end{equation}

Recall that $\mu_{i,e}(\vec{x}) = c_{i,e}(x_e)+x_{i,e} c_{i,e}'(x_e)$. As $e_1 \in E^{i,+}$, we have that  $x_{i,e_1} > y_{i,e_1}$. Because $c_{i,e_1}$ is strictly increasing and $e_1 \in (E^+  \cup E^=)$ we get $c_{i,e_1}(x_{e_1})\geq  c_{i,e_1}(y_{e_1})$ and $c'_{i,e_1}(x_{e_1})>0$ using $x_{e_1}\geq x_{i,e_1}>0$. Moreover, since $c_{i,e_1}$ is convex, the slope of $c_{i,e_1}$ is non-decreasing and, hence, $c'_{i,e_1}(x_{e_1})\geq  c'_{i,e_1}(y_{e_1})$.
Putting things together, we get
\begin{equation}\label{equation3}\mu_{i,e_1}(\vec{y}) < \mu_{i,e_1}(\vec{x}). \end{equation}

Similarly, as $e_k \in E^{i,-} \cap (E^-  \cup E^=)$, we have:  
\begin{equation}\label{equation4} \mu_{i,e_k}(\vec{x}) \leq \mu_{i,e_k}(\vec{y}). \end{equation}

Combining~\eqref{equation1},~\eqref{equation2},~\eqref{equation3}  and~\eqref{equation4}, we have: 
\begin{equation*} \label{costcontradiction} \mu_{i,e_k}(\vec{x}) \leq \mu_{i,e_k}(\vec{y}) \leq \mu_{i,e_1}(\vec{y}) < \mu_{i,e_1}(\vec{x}) \leq \mu_{i,e_k}(\vec{x}).  \end{equation*}

This is a contradiction and therefore either strategy $\vec{x_i}$ or $\vec{y_i}$ is not a Nash equilibrium for player $i$. 
\end{proof}

\section{Applications}\label{sec:special}
\label{sec:applications}

In this section we demonstrate that bidirectional flow polymatroids are general enough to allow for meaningful applications. As described in Example~\ref{matroidcongestiongame}, matroid congestion games belong to polymatroid congestion games. A subclass of matroids are \emph{base orderable} matroids introduced by Brualdi~\cite{Brualdi69} and Brualdi and Scrimger~\cite{Brualdi68}. 

\begin{definition}[Base orderable matroid]
A matroid $\mathcal{M}=(E,\mathcal{I})$ is called base orderable if for every pair of bases $(B,B')$ there exists a bijective function $g_{B,B'}:B \rightarrow B'$ such that both  $B-e+g_{B,B'}(e) \in \mathcal{B}$ and  $B'+e-g_{B,B'}(e) \in \mathcal{B}$ { for all $e \in E$}.
\end{definition}
We prove that polymatroids defined by the rank function of a base orderable matroid belong to the class of bidirectional flow polymatroids. Therefore, all matroid congestion games for which the player-specific matroids are base orderable have unique equilibria.
\begin{theorem} \label{baseorderableproof} 
Let $rk$ be the rank function of a base orderable matroid $M=(E, rk)$.
Then, for any $d\geq 0$, the polymatroid $(E, d \cdot rk)$  is a bidirectional flow polymatroid.
\end{theorem}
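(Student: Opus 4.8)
The plan is to exhibit, for an arbitrary pair $\vec x,\vec y\in P_{d\cdot rk}$, an \emph{explicit} bidirectional flow in $D(\vec x,\vec y)$, assembled from the base-ordering bijections applied to a common convex decomposition of $\vec x$ and $\vec y$. First I would reduce to a decomposition with \emph{matched} coefficients. By Example~\ref{matroidcongestiongame} the vertices of $P_{d\cdot rk}$ are exactly the vectors $d\cdot\chi_B$ for $B\in\mathcal B$, so by Carathéodory we may write $\vec x=\sum_k\lambda_k\,d\chi_{B_k}$ and $\vec y=\sum_\ell\mu_\ell\,d\chi_{C_\ell}$ as convex combinations. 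Passing to the product coupling $\nu_{(k,\ell)}:=\lambda_k\mu_\ell$ and relabelling each pair $(k,\ell)$ by a single index $j$ (with $B_j:=B_k$, $C_j:=C_\ell$, discarding $j$ with $\nu_j=0$) gives
\[ \vec x=\sum_j\nu_j\,d\chi_{B_j},\qquad \vec y=\sum_j\nu_j\,d\chi_{C_j},\qquad \nu_j> 0,\ \textstyle\sum_j\nu_j=1, \]
i.e.\ the two vectors are built from the \emph{same} coefficients, now layer by layer over pairs of bases.

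Next I would use base orderability inside each layer. For each $j$ fix a bijection $g_j:=g_{B_j,C_j}\colon B_j\to C_j$ as in the definition, so $B_j-e+g_j(e)\in\mathcal B$ and $C_j+e-g_j(e)\in\mathcal B$ for every $e\in B_j$. A one-line argument shows $g_j$ must fix $B_j\cap C_j$ pointwise: if $e\in B_j\cap C_j$ then $C_j+e-g_j(e)$ is $C_j$ with the element $g_j(e)\in C_j$ deleted, which has too few elements to be a base unless $g_j(e)=e$. Hence $g_j$ restricts to a bijection $B_j\setminus C_j\to C_j\setminus B_j$. Now define a nonnegative arc-weighting on $E\times E$ by
\[ f(e,e'):=\sum_{j\,:\,g_j(e)=e',\ e\ne e'}\nu_j\,d. \]

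The remaining step is to check that $f$ is a bidirectional flow, i.e.\ that it meets Definition~\ref{def:bidirectional}. For arc existence and capacity, fix $(e,e')$ with $f(e,e')>0$ and set $J:=\{j:g_j(e)=e',\ e\ne e'\}$; for $j\in J$ one has $e\in B_j\setminus C_j$ and $e'\in C_j\setminus B_j$. Replacing, for each $j\in J$, the term $\nu_j d\chi_{B_j}$ by $\nu_j d\chi_{B_j-e+e'}$ transforms $\vec x$ into $\vec x+f(e,e')(\chi_{e'}-\chi_e)$, still a convex combination of scaled base vectors and hence in $P_{d\cdot rk}$; symmetrically, replacing $\nu_j d\chi_{C_j}$ by $\nu_j d\chi_{C_j+e-e'}$ transforms $\vec y$ into $\vec y+f(e,e')(\chi_e-\chi_{e'})\in P_{d\cdot rk}$. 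Thus $(e,e')$ is an arc of $D(\vec x,\vec y)$ with $\hat c_{\vec x,\vec y}(e,e')\ge f(e,e')$. For supplies and demands, note $x_e=\sum_{j:e\in B_j}\nu_j d$ and $y_e=\sum_{j:e\in C_j}\nu_j d$, so $x_e-y_e=\sum_{j:e\in B_j\setminus C_j}\nu_j d-\sum_{j:e\in C_j\setminus B_j}\nu_j d$; on the other hand, using that $g_j$ and $g_j^{-1}$ fix common elements, the net outflow of $f$ at $e$ equals $\sum_{e'}f(e,e')-\sum_{e''}f(e'',e)=\sum_{j:e\in B_j\setminus C_j}\nu_j d-\sum_{j:e\in C_j\setminus B_j}\nu_j d$. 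Hence the net outflow at every $e$ equals $x_e-y_e$, which is precisely the supply/demand pattern required; so $D(\vec x,\vec y)$ admits a bidirectional flow and $(E,d\cdot rk)$ is a bidirectional flow polymatroid.

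The main obstacle is conceptual rather than computational: one must route the ``overload'' of $\vec x$ to its ``underload'' using arcs that live in the \emph{two-sided} graph $D(\vec x,\vec y)$, which requires, for the chosen element $g_j(e)$, that the exchange be feasible in the $B_j$-layer of $\vec x$ \emph{and} its reverse be feasible in the $C_j$-layer of $\vec y$ — this is exactly what the matched bijection of a base orderable matroid guarantees, whereas a generic matroid only supplies exchanges on each side separately. The two technical points that need care are (i) arranging matched coefficients $\nu_j$ so that exchanges in distinct layers can be aggregated without interfering, and (ii) verifying that aggregating \emph{all} layers with $g_j(e)=e'$ still keeps both $\vec x$ and $\vec y$ inside $P_{d\cdot rk}$, which is what yields the capacity bound $\hat c_{\vec x,\vec y}(e,e')\ge f(e,e')$.
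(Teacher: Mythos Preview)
Your proof is correct and follows essentially the same strategy as the paper: decompose $\vec x$ and $\vec y$ into convex combinations of (scaled) base indicators, couple the two decompositions, apply the base-ordering bijection $g_{B,B'}$ on each coupled pair, and aggregate the resulting elementwise exchanges into the flow $f$. The only difference is cosmetic---the paper couples the two base decompositions via an arbitrary transshipment from overloaded bases $\mathcal B^+$ to underloaded bases $\mathcal B^-$, whereas you use the product coupling $\nu_{(k,\ell)}=\lambda_k\mu_\ell$; either choice produces a valid bidirectional flow by the same verification you give.
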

\begin{proof}
{Similar as in Example~\ref{matroidcongestiongame},} polytope $P$ describes how weight $d$ can be divided over bases in $\mathcal{B}$ to obtain a feasible strategy $\vec{x} \in  P_{d \cdot rk}$. We call vector $\vec{x}' \in P$ a \emph{base decomposition} of $\vec{x}$ if it satisfies $x_{e} = \sum_{B \in \mathcal{B}; e\in B} x'_B$ for all $e\in E$. 
{Note that a base composition of $\vec{x} \in P_{d \cdot rk}$ always exists, as $P_{d \cdot rk}$ is the convex hull of all characteristic vectors (multiplied by $d$) of all the bases of matroid $M$ (see~\cite[Corollary 3.25]{Fujishige05}).}  
Given two vectors $\vec{x},\vec{y} \in P_{d \cdot rk}$, we look at the differences between two base decompositions $\vec{x}', \vec{y}' \in  P$. We introduce sets $\mathcal{B}^+ ,\mathcal{B}^- \subset \mathcal{B}$ that will contain respectively the \emph{overloaded} and \emph{underloaded bases}: $\mathcal{B}^{+} = \{B \in \mathcal{B} | x'_B > y'_B \}$ and $\mathcal{B}^{-} = \{B \in \mathcal{B}| x'_B < y'_B \}$.

Using these sets we create the complete directed bipartite graph $D_{\mathcal{B}}(\vec{x}, \vec{y})$ on vertices $(\mathcal{B}^{+},\mathcal{B}^{-})$, where bases $B \in  \mathcal{B}^{+}$ have a supply $ x'_{B} - y'_{B} $ and bases $B \in  \mathcal{B}^{-}$ have a demand $ y'_{B} - x'_{B}$. As the total supply equals the total demand, there exists a transshipment $t$ from strategies  $B\in \mathcal{B}^{+}$ to strategies $B' \in \mathcal{B}^{-}$, such that, when carried out, we obtain $\vec{y}'$ from $\vec{x}'$. 
We denote by $t_{(B,B')}$
the amount of load transshipped from $B\in \mathcal{B}^{+}$ to $B' \in \mathcal{B}^{-}$. 

In the remainder of the proof, we use transshipment $t$ to construct a flow $f$ in graph $D(\vec{x},\vec{y})$. As the polymatroid is defined by the rank function of a base orderable matroid, for every pair of bases $(B,B')$ there exists a bijective function $g_{B,B'}:B \rightarrow B'$ such that both  $B-e+g_{B,B'}(e) \in \mathcal{B}$ and  $B'+e-g_{B,B'}(e) \in \mathcal{B}$ for all $e \in B$. Note that when $e\in B \cap B'$, $g_{B,B'}(e)=e$. {Using the function $g_{B,B'}$, we can decompose the value transshipped from $B$ to $B'$ into a transshipment between resources. For all combinations of resources $(e,e') \in E \times E$ we define:}
$$\mathcal{B}^2_{e,e'}:=\left\{ (B,B')\in \mathcal{B}^+ \times \mathcal{B}^- | e \in B, e'\in B' \text{ and } g_{B,B}(e)=e' \right\}.$$ 
We define flow $f$ as: $f_{(e,e')}=\sum_{(B,B') \in \mathcal{B}^2_{e,e'} } t_{B,B'}$ for all $(e,e') \in E \times E$. {Then $f$ has the following two properties: 
\begin{enumerate}
	\item It satisfies all demands and supplies in $D(\vec{x},\vec{y})$ as $f$ is created from base decompositions $\vec{x}',\vec{y}'$ for strategy profiles $\vec{x}$ and $\vec{y}$.
	\item It satisfies capacities $\hat{c}_{\vec{x},\vec{y}}(e,e')$ of $D(\vec{x},\vec{y})$, as $\vec{x}+ \sum_{(B,B') \in  \mathcal{B}^2_{e,e'}} t_{B,B'} \cdot \left(\chi_{e'} - \chi_{e} \right)$
	is a convex combination of bases, and thus an element of $P_{d \cdot rk}$. Therefore, $$f_{(e,e')} = \sum_{(B,B') \in \mathcal{B}^2_{e,e'} } t_{B,B'} <  \hat{c}_{\vec{x},\vec{y}}(e,e').$$
\end{enumerate}
Hence, $f$ is a feasible flow in $D(\vec{x},\vec{y})$, satisfying all supplies and demands. As $\vec x,\vec y \in P_{d \cdot rk}$ were chosen arbitrarily, $P_{d \cdot rk}$ is a bidirectional flow polymatroid. 
}

\end{proof}
An application of these results can be found in the \emph{spanning tree games}.

\begin{example}[Spanning Tree Games]
Consider a finite set of players $N=\{1, \dots n\}$ and an undirected graph $G=(V,E)$ with non-negative, increasing, differentiable, convex and player specific edge costs functions $c_{i,e}$ for all $e \in E$ and $i \in N$. In a spanning tree game, every player $i$ is associated with a weight $d_i$ and a subgraph $G_i$ of $G$. A strategy for player $i$ is to divide it's weight along the spanning trees of $G_i$, to minimize his total costs. 
{If $G$ is a generalized series parallel graph}, then $P_{d_i \cdot rk_i}$ is a bidirectional flow polymatroid, where $rk_i$ be the rank function for the graphic matroid on subgraph $G_i$, (cf. Figure~\ref{inclusionmatroids}). Theorem~\ref{baseorderableproof} implies that equilibria will be unique.
\end{example}

For graphic matroids, the generalized series-parallel graph is the maximal graph structure that allows for a bidirectional flow between every pair of strategies. 

\begin{theorem}[Korneyenko~\cite{Korneyenko1994}, Nishizeki~\cite{Nishizeki1988}]
A graph is generalized series-parallel if and only if it does not contain the $K_4$ as a minor.
\end{theorem}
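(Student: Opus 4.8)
It is convenient to use the closure description of the class: a multigraph is \emph{generalized series-parallel} (GSP) precisely when it can be obtained from single edges and isolated vertices by repeatedly (i) adding an edge parallel to an existing edge, (ii) subdividing an edge, (iii) attaching a pendant edge, and (iv) taking disjoint unions; equivalently, every block of the graph is a two-terminal series-parallel graph. Since $K_4$ has maximum degree three, a graph has a $K_4$ minor if and only if it contains a subdivision of $K_4$, and we pass between the two formulations freely.

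The implication ``GSP $\Rightarrow$ no $K_4$ minor'' is the easy one. On the one hand, $K_4$ is not itself GSP: it has more than one edge and is $2$-connected, but has no pair of vertices forming a $2$-separation and no parallel edges, so it cannot be the result of a series or of a parallel composition. On the other hand, the class of GSP multigraphs is closed under taking minors: deleting a vertex, deleting an edge, and contracting an edge each transform the decomposition tree witnessing membership into a (possibly degenerate) decomposition tree of the same kind, which is checked by a routine induction on that tree. These two facts together say that no GSP graph has a $K_4$ minor.

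For ``no $K_4$ minor $\Rightarrow$ GSP'' I would induct on $|E(G)|$, with graphs having at most one edge as the trivial base case. If $G$ has two parallel edges between some pair $u,v$, delete one of them to obtain $G'$; then $G'$ is $K_4$-minor-free, hence GSP by induction, and $G$ is recovered from $G'$ by operation (i). So assume $G$ is simple. If $G$ has a vertex of degree $0$ or $1$, delete it, apply the induction hypothesis to what remains, and recover $G$ by (iv) or (iii), respectively. If $G$ has a vertex $v$ of degree exactly $2$ with neighbours $a\neq b$ (distinct, by simplicity), let $G'$ be the minor of $G$ obtained by contracting the edge $va$; then $G'$ is $K_4$-minor-free, hence GSP by induction, and $G$ is recovered from $G'$ by subdividing the edge $ab$, i.e.\ by (ii). The only case not covered is that $G$ is simple with minimum degree at least three, and then it suffices to show that such a $G$ is \emph{not} $K_4$-minor-free.

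This last statement is the heart of the argument and the step I expect to be the main obstacle: \emph{every simple graph of minimum degree at least three contains a subdivision of $K_4$} (a classical Dirac-type lemma). One route is to extract a subgraph $\Theta\subseteq G$ that is a theta graph --- it exists because $\delta(G)\ge 3$: start from any cycle and grow one ear using a spare edge at one of its vertices --- with poles $p,q$ joined by three internally disjoint paths, and to choose $\Theta$ so that the union of two of these paths has minimum possible total length. Every internal vertex of $\Theta$ still has an incident edge leaving $\Theta$; following such edges from the interior of one of the paths one must eventually return to $\Theta$, and the minimality of the chosen $\Theta$ forces this return point to lie off that path, producing a fourth branch vertex and hence a subdivision of $K_4$. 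An alternative is to split $G$ along its $2$-separations down to a $3$-connected component and apply Tutte's wheel theorem, noting that the smallest wheel $W_3$ is $K_4$. With this lemma in hand, the remaining case in the induction never occurs, so every $K_4$-minor-free graph is disposed of by one of the earlier cases; the induction closes and the theorem follows.
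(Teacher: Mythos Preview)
The paper does not prove this theorem at all: it is stated with attribution to Korneyenko and Nishizeki and used as a black box, so there is no ``paper's own proof'' to compare against. Your sketch is a standard and essentially correct argument for this classical characterization.

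A few remarks on the sketch itself. The forward direction is fine once one accepts that the GSP class is minor-closed; your justification by ``routine induction on the decomposition tree'' is the right idea but is where most of the bookkeeping hides, so in a full write-up this deserves a careful case analysis (in particular, contraction can merge terminals and collapse compositions). For the reverse direction, the reduction to the statement ``every simple graph with $\delta\ge 3$ contains a $K_4$ subdivision'' is exactly the classical route, and this lemma is due to Dirac (1952). Your theta-graph argument is on the right track but, as written, the phrase ``an incident edge leaving $\Theta$'' is slightly off: the third edge at an internal vertex of $\Theta$ might be a chord of $\Theta$ rather than an edge to $V(G)\setminus V(\Theta)$, and one has to argue that either possibility yields the desired subdivision or contradicts the minimality of the chosen $\Theta$. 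A cleaner alternative is the longest-path argument: the endpoint of a longest path has all of its (at least three) neighbours on the path, which immediately yields two cycles sharing an arc and hence a $K_4$ subdivision.
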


Let $rk$ be the rank function for the graphic matroid on the $K_4$, we show that there exist two conflicting strategies $\vec{x},\vec{y} \in P_{rk}$, thus there does not exist a flow $f$ in $D(\vec{x},\vec{y})$.

\begin{example} Polymatroid $(E,rk)$ based on the rank function of the graphic matroid on the $K_4$ is not a bidirectional flow polymatroid. Let the resources be numbered as in Figure~\ref{k4notperfect} and look at the strategies $\vec{x}=(1,1,0,0,0,1)$ and $\vec{y}=(0,0,1,1,1,0)$. Graph $D(\vec{x},\vec{y})$ is depicted in Figure~\ref{k4notperfect}. Then there is no flow $f$ in  $D(\vec{x},\vec{y})$ that satisfies all supplies and demands. Resource 1 and 6 have both a supply of 1 and can only exchange load with resource 4 , which only has demand 1. Thus such a flow $f$ does not exist, and $(E,rk)$ is not a bidirectional flow polymatroid.

\begin{figure}[h]
\centering
\begin{tikzpicture}[main_node/.style={circle,fill=black!10,draw,minimum size=1.5em,inner sep=1pt}, scale=0.7]
\tikzset{edge/.style ={thick,red!100}}
\tikzset{edged/.style ={->,dashed}}
\tikzset{invis/.style ={fill=white!10}}

 \node[main_node] (1) at (-0.3, 0)  {};
 \node[main_node] (2) at (4.3, 0)  {};
 \node[main_node] (3) at (2, 4)  {};
 \node[main_node] (4) at (2, 1.6)  {};

 \node[main_node] (5) at (6, 4)  {1};
 \node[main_node] (6) at (6, 2)  {2};
 \node[main_node] (7) at (6, 0)  {6};

 \node[main_node] (8) at (8, 4)  {3};
 \node[main_node] (9) at (8, 2)  {4};
 \node[main_node] (10) at (8, 0)  {5};

\path [ultra thick](1) edge [] node [below] {1} (2);
\path  [ultra thick](2) edge [] node [above right] {2} (3);
\path  [dashed](3) edge [] node [above left] {3} (1);
\path [dashed](1) edge [] node [below right] {4} (4);
\path [dashed](2) edge [] node [below left] {5} (4);
\path  [ultra thick](3) edge [] node [below left] {6} (4);

\path  (6) edge [] node [below left] {} (8);
\path  (6) edge [] node [below left] {} (10);
\path  (6) edge [] node [below left] {} (9);
\path  (5) edge [] node [below left] {} (9);
\path  (7) edge [] node [below left] {} (9);

\end{tikzpicture}
\caption{Left: the $K_4$ with two strategies $\vec{x}$ (thick), $\vec{y}$ (dashed). Right: $D(\vec{x},\vec{y})$.}
\label{k4notperfect}
\end{figure}
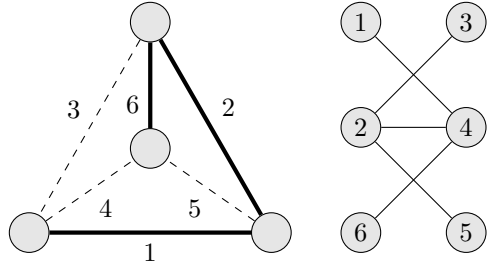
\end{example}

\section{{Non-differentiable Convex Functions}}\label{sec:nondifferentiable}
So far we assumed cost functions to be differentiable. When this is not the case, the proof we gave in the previous section won't hold. When a function is not differentiable, one speaks about the left derivative $c^-(x)$ and the right derivative $c^+(x)$. In the same way we define $\mu_{i,e}^-(\vec{x})=c(x_e) + x_{i,e}c^-(x_e)$ and $\mu_{i,e}^+(\vec{x})=c(x_e) + x_{i,e}c^+(x_e)$. For non-differentiable functions, equilibrium condition \ref{equalresource} generalizes as follows: 

\begin{lemma}[Theorem 8.1, \cite{Fujishige05}]
	\label{equalresourcenondiff}
	Let  $\vec{x}$ be a Nash equilibrium in a polymatroid congestion game.  If $x_{i,e}>0$, then for all $e' \in E$ for which  there is an $\epsilon>0$ such that
	$\vec{x_i}+\epsilon(\chi_{e'}-\chi_e) \in P_{\rho_i}$, we have $\mu^-_{i,e}(\vec{x}) \leq \mu^+_{i,e'}(\vec{x})$.
\end{lemma}

The uniqueness proof in the previous section will not hold as equations~\eqref{equation3} and~\eqref{equation4} might fail to hold using this new equilibrium condition. An example with multiple equilibria is as follows.

\begin{example}
	We look at a two player asymmetric game on three resources. Both players have equal weight 1, and the first player can only use resource 1 or 2, the second player can only use resource 2 or 3. Note that this game is a 1-uniform matroid congestion, and therefore a bidirectional polymatroid. We use the following non-player specific cost functions:
	$$c_1(x_1) = {4}x_1 \; \; \; \; \; c_2(x_2) = \begin{cases} x_2 & \text{ if } x_2<1 \\ 10x_2- 9 & \text{ otherwise }   \end{cases} \; \; \; \; \; c_3(x_3) = {4}x_3. $$
	Let $(x_{1,1},x_{1,2},x_{2,2},x_{2,3})$ denote a game. Then both $(\frac{1}{3},\frac{2}{3},\frac{1}{3},\frac{2}{3})$ and $(\frac{2}{3},\frac{1}{3},\frac{2}{3},\frac{1}{3})$ correspond to Nash equilibria. Note that these are two Nash equilibria where the total load on the resources differ.
\end{example}

The same example can be modified for one with symmetric strategy spaces, but player specific costs, by incurring a high cost on the unavailable resources. The question remains unresolved for symmetric player specific cost functions.

\section{Non-Matroid Set Systems}\label{sec:nonmatroids}
We now derive necessary conditions on a given set system $(\S_i)_{i\in N}$ 
so that any atomic splittable congestion game based on $(\S_i)_{i\in N}$ 
admits unique equilibria.
We show that the \emph{matroid property} is a necessary condition on the players' strategy spaces that guarantees uniqueness of equilibria \emph{without} taking into account how the strategy spaces of different players interweave.\footnote{The term ``interweaving" has been introduced by Ackermann et al.~\cite{Ackermann08,Ackermann09}.}
To state this property mathematically precisely, 
we introduce the notion of \emph{embeddings} {of $\S_i$ in  $E$.
An embedding is a map $\tau:=(\tau_i)_{i\in N}$, where every $\tau_i: E_i \rightarrow E$ is an injective map from $E_i:=\cup_{S\in\S_i} S$ to $E$.
For $X\subseteq E_i$, we denote $\tau_i(X):=\{\tau_i(e), e\in X\}$. Mapping $\tau_i$ induces an isomorphism $\phi_{\tau_i}:\S_i \rightarrow \S'_i$ with $ S\mapsto\tau_i(S)$ and $\S'_i:=\{\tau_i(S)|S\in \mathcal{\S}_i\}$. Isomorphism $\phi_{\tau}=(\phi_{\tau_i})_{i\in N}$ induces the isomorphic strategy space $\phi_{\tau}(\S)=(\phi_{\tau_i}(\S_i))_{i\in N}$. }

\begin{definition}\label{def:embeddings}
A family of set systems $\S_i\subseteq 2^{E_i}$, for $ i\in N$ is said to have the \emph{strong uniqueness property} if for all embeddings $\tau$, the induced game with isomorphic strategy space $\phi_{\tau}(\S)$ has unique Nash equilibria.
\end{definition}

Since for bases of matroids any embedding $\tau_i$ {with isomorphism $\phi_{\tau_i}$} 
has the property that $\phi_{\tau_i}(\S_i)$ is again a collection of bases of a matroid, we obtain the
following immediate consequence of Theorem~\ref{uniquenessresult}.
\begin{corollary}\label{cor:strong-uniqueness}
If $(\S_i)_{i\in N}$ consists of bases
of a base-orderable matroid $M_i=(E,\I_i),$ $ i\in N$, then $(\S_i)_{i\in N}$
possess the strong uniqueness property.
\end{corollary}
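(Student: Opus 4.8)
The plan is to derive Corollary~\ref{cor:strong-uniqueness} as an essentially immediate consequence of Theorem~\ref{uniquenessresult} and Theorem~\ref{baseorderableproof}, the only genuine content being the observation that base-orderability is preserved under the isomorphisms $\phi_{\tau_i}$ induced by embeddings. First I would fix an arbitrary embedding $\tau=(\tau_i)_{i\in N}$ and consider the induced game on ground set $E$ with isomorphic strategy space $\phi_{\tau}(\S)=(\phi_{\tau_i}(\S_i))_{i\in N}$. Since each $\tau_i:E_i\to E$ is injective, it is a bijection onto its image $\tau_i(E_i)\subseteq E$, and relabelling the elements of a matroid by a bijection yields again a matroid: concretely, $\M'_i:=(\tau_i(E_i),\{\tau_i(I)\mid I\in\I_i\})$ is a matroid whose bases are exactly $\phi_{\tau_i}(\S_i)=\{\tau_i(B)\mid B\in\S_i\}$. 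I would then extend $\M'_i$ to a matroid on all of $E$ by declaring the elements of $E\setminus\tau_i(E_i)$ to be loops (equivalently, taking the direct sum with a free/loop matroid on $E\setminus\tau_i(E_i)$); loops never lie in any base, so the base set is unchanged and the strategy space of player $i$ in the induced game is exactly the base polytope $P_{rk'_i}$ (after scaling by the weight $d_i$) of this matroid on $E$.

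Next I would verify that base-orderability transfers along $\tau_i$. If $g_{B,B'}:B\to B'$ is the bijection witnessing base-orderability of $\M_i$ for bases $B,B'$, then $g'_{\tau_i(B),\tau_i(B')}:=\tau_i\circ g_{B,B'}\circ\tau_i^{-1}$ is a bijection $\tau_i(B)\to\tau_i(B')$, and since $\tau_i$ is an isomorphism of matroids onto $\M'_i$ it carries the exchange bases $B-e+g_{B,B'}(e)$ and $B'+e-g_{B,B'}(e)$ to bases of $\M'_i$; adjoining loops does not affect this. Hence each $\M'_i$ (on $E$) is base-orderable. By Theorem~\ref{baseorderableproof}, the polymatroid $(E,d_i\cdot rk'_i)$ is a bidirectional flow polymatroid, so the induced game has the property that every player's strategy space is the base polytope of a bidirectional flow polymatroid. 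Theorem~\ref{uniquenessresult} then gives uniqueness of Nash equilibria for this induced game. Since $\tau$ was an arbitrary embedding, $(\S_i)_{i\in N}$ possesses the strong uniqueness property in the sense of Definition~\ref{def:embeddings}.

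I do not expect a serious obstacle here; the statement is labelled ``immediate consequence'' in the text and the proof is really a bookkeeping argument. The one point that deserves a careful sentence rather than a hand-wave is the reduction from ``bases of a matroid on $E_i$'' to ``base polytope of a polymatroid on $E$'': one must be explicit that an embedding may fail to be surjective, so the honest statement is that $\phi_{\tau_i}(\S_i)$ is the base set of a matroid on $E$ (namely $\M'_i$ with the complementary elements made into loops), and that Example~\ref{matroidcongestiongame} then identifies the player's strategy polytope $P_i$ with $P_{d_i\cdot rk'_i}$. Everything else — preservation of the matroid axioms under bijective relabelling, preservation of the base-orderability bijection, and the invocation of Theorems~\ref{baseorderableproof} and~\ref{uniquenessresult} — is routine.
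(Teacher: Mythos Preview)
Your proposal is correct and follows essentially the same approach as the paper, which dispatches the corollary in the single sentence preceding it: since any embedding $\tau_i$ sends bases of a (base-orderable) matroid to bases of a (base-orderable) matroid, the induced game is a polymatroid congestion game to which Theorem~\ref{baseorderableproof} and then Theorem~\ref{uniquenessresult} apply. Your write-up is more explicit than the paper about extending $\M'_i$ to all of $E$ via loops and about transporting the base-orderability bijection along $\tau_i$, but these are exactly the routine details the paper suppresses.
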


For obtaining necessary conditions we need a certain property of non-matroids stated in the following {lemma}. 

\begin{lemma}{{\cite[Lemma 16]{Ackermann09}}}\label{l.anti-matroid}
 If $\S_i\subseteq 2^{E_i}$  with $\S_i\neq \emptyset$ is a non-matroid, then there exist $X,Y\in \S_i$ and $\{a,b,c\}\subseteq X\Delta Y:= (X\setminus Y) \cup (Y\setminus X)$
such that for each set $Z\in \S_i$ with $Z\subseteq X\cup Y$, either
$a\in Z$ or $\{b,c\} \subseteq Z$.
\end{lemma}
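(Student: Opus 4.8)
The plan is to prove Lemma~\ref{l.anti-matroid} directly from the failure of the matroid exchange axiom, following the structure of the arguments in \cite{HarksP14,Ackermann09}. First I would recall that a nonempty set system $\S_i \subseteq 2^{E_i}$ is a matroid (on its union) precisely when it satisfies the basis exchange axiom; since $\S_i$ is a non-matroid, there exist $X, Y \in \S_i$ and an element $a \in X \setminus Y$ such that for \emph{every} $g \in Y \setminus X$ we have $(X \setminus \{a\}) \cup \{g\} \notin \S_i$. (One has to be slightly careful if $\S_i$ is not an antichain or the members do not all have the same size; the standard reduction is to pass to the inclusion-minimal violating pair $X,Y$ under a suitable measure such as $|X \Delta Y|$, exactly as in the cited lemmas, which legitimizes treating $\S_i$ as if it were a family of equal-cardinality "bases".) The goal is then to extract from this single failed exchange a triple $\{a,b,c\} \subseteq X \Delta Y$ obstructing all sets $Z \in \S_i$ with $Z \subseteq X \cup Y$.

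Next I would set up the exchange-graph / counting argument on the restricted system $\S_i|_{X\cup Y} := \{Z \in \S_i : Z \subseteq X \cup Y\}$, which still contains $X$ and $Y$. Consider the symmetric difference $X\Delta Y$ partitioned into $X\setminus Y$ and $Y\setminus X$. The failed exchange for $a$ says no single swap of $a$ for an element of $Y\setminus X$ lands back in $\S_i$. I would then argue, by minimality of the pair $(X,Y)$ and by applying the (ordinary, one-directional) exchange property that \emph{does} hold between other pairs, that the "blocking" witnessed by $a$ must in fact involve a pair $\{b,c\}$: concretely, one shows that if some $Z \subseteq X\cup Y$ in $\S_i$ avoids $a$, then $Z$ cannot contain exactly one of two specific elements $b,c \in X\Delta Y$ — otherwise a shorter counterexample to basis exchange (a smaller $|X'\Delta Y'|$) would be produced by combining $Z$ with $X$ or $Y$, contradicting minimality. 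Choosing $b \in X\setminus Y$ and $c \in Y\setminus X$ appropriately (the roles of $b,c$ coming from the two "sides" a repaired exchange would have to use) yields: every $Z \in \S_i$ with $Z \subseteq X\cup Y$ either contains $a$, or contains both $b$ and $c$.

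The step I expect to be the main obstacle is making the minimality argument airtight, i.e.\ showing that the obstruction is genuinely a \emph{pair} $\{b,c\}$ and not something more complicated — ruling out the possibility that the sets $Z$ avoiding $a$ are blocked only by larger subsets. This is where one leans on choosing $(X,Y)$ to minimize $|X\Delta Y|$ (or $|X \setminus Y|$) among all exchange-violating pairs: any $Z$ that is "too close" to $X$ or $Y$ and still lies in $\S_i$ would either directly repair the exchange for $a$ or supply a strictly smaller violating pair, and a short case analysis on whether $a \in Z$, $b \in Z$, $c \in Z$ closes every branch. Since the excerpt explicitly states that the proof "can be derived from the proof of Lemma 5.1 in \cite{HarksP14}, or the proof of Lemma 16 in \cite{Ackermann09}", I would in the writeup either reproduce that minimal-counterexample analysis in full or, more economically, cite it and only spell out the translation of their conclusion into the precise $\{a,b,c\}$ formulation used here; the combinatorial heart is entirely contained in those references.
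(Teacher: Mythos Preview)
The paper gives no proof of its own for this lemma; it simply states that the argument can be derived from Lemma~5.1 of~\cite{HarksP14} or Lemma~16 of~\cite{Ackermann09}. Your plan to follow those references is therefore exactly the paper's approach, and your closing option --- cite them and only spell out the translation into the $\{a,b,c\}$ formulation --- would reproduce what the paper does.

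That said, your sketch contains a concrete error that would derail any attempt to actually write the argument out. You propose ``choosing $b \in X\setminus Y$ and $c \in Y\setminus X$'', but this placement is impossible. With $a \in X\setminus Y$, the set $Z=Y$ itself lies in $\S_i$, is contained in $X\cup Y$, and satisfies $a\notin Z$; the conclusion of the lemma then forces $\{b,c\}\subseteq Y$, and combined with $\{b,c\}\subseteq X\Delta Y$ this gives $b,c\in Y\setminus X$. So the three elements necessarily split as $a\in X\setminus Y$ and $b,c\in Y\setminus X$: the failed basis exchange at $a$ is witnessed by (at least) two elements of $Y\setminus X$, neither of which can be swapped for $a$ individually. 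Your minimality-of-$|X\Delta Y|$ heuristic is indeed the right engine, but the ``short case analysis'' you gesture at is never carried out, and with $b$ on the wrong side it could not close. If you intend to reproduce the proof rather than cite it, this needs to be fixed.
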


\begin{theorem}\label{thm:nonmatroid}
Let $|N|\geq 3$ and assume that for all $i\in N$, $\S_i$ is 
a non-matroid set system. Then, $(\S_i)_{i\in N}$
does not have the strong uniqueness property.
\end{theorem}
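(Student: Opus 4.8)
The plan is to construct, for any triple of non-matroid set systems $(\S_1,\S_2,\S_3)$ (and any further players, whose strategy spaces we may fix trivially), an explicit embedding $\tau$ and a choice of weights and cost functions so that the resulting game has at least two distinct Nash equilibria. The starting point is Lemma~\ref{l.anti-matroid}: for each player $i\in\{1,2,3\}$ we obtain sets $X_i,Y_i\in\S_i$ and a ``bad triple'' $\{a_i,b_i,c_i\}\subseteq X_i\Delta Y_i$ with the stated covering property. The key idea is to use the embedding $\tau$ to overlay these three bad triples onto a common ground set of three resources arranged in a cyclic pattern: roughly speaking, resource $a_i$ of player $i$ is identified with the edge ``opposite'' to player $i$ in a triangle, while $b_i,c_i$ are identified with the two edges incident to player $i$'s ``corner''. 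Concretely, I would map the three bad triples so that the union structure mimics the $3$-vertex cycle $C_3$: each player effectively chooses between routing its weight along one ``long'' edge ($a_i$) or along the two ``short'' edges ($b_i,c_i$), and consecutive players share short edges. All other elements of $E_i$ are embedded into private resources used by nobody else and equipped with prohibitively large costs, so that in any equilibrium player $i$ only uses sets $Z\subseteq X_i\cup Y_i$; by the covering property such a $Z$ either contains $a_i$ or contains both $b_i,c_i$, which pins the relevant load down to the two ``pure'' options up to the irrelevant private part.

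Once this reduction to a $3$-player game on $C_3$ is in place, the second step is to exhibit the two equilibria. With the three resources of the triangle and a common, symmetric cost function $c$ (say $c(z)=z$ or a suitable convex function, equal across players on the triangle edges), the natural candidates are the two ``rotational'' profiles: in profile $\vec x$ each player $i$ sends its weight clockwise (using $a_i$, i.e. the far edge), and in profile $\vec y$ each player sends counterclockwise. By the $C_3$ symmetry and the equal weights $d_1=d_2=d_3=:d$, both profiles put the same total load on every triangle edge, but they differ in the player-specific loads $x_{i,e}\neq y_{i,e}$, so they are genuinely distinct strategy profiles in the sense of Section~1.2. To verify each is a Nash equilibrium I would appeal to the marginal-cost / variational characterization (the analogue of Lemma~\ref{equalresource}, or directly the first-order optimality conditions for each player's convex program): one checks that for player $i$ the marginal cost along its chosen route is no larger than along the alternative route, which reduces to a single inequality in $c,c',d$ that holds for an appropriately chosen convex $c$ (and is made strict, if desired, to rule out that the two equilibria coincide with a continuum of others). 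Finally, if $|N|>3$, the extra players are given weight $d_i=0$, or a singleton strategy space embedded into an isolated high-cost resource, so they are inert and do not affect the argument.

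I would organize the write-up as: (i) invoke Lemma~\ref{l.anti-matroid} to extract the bad triples; (ii) define the embedding $\tau$ precisely, specifying the three shared triangle resources, the cyclic identification of $(a_i,b_i,c_i)$, and the private high-cost resources for everything else; (iii) prove the ``reduction'' claim that in any Nash equilibrium of the embedded game each player's load on the triangle is, up to the private part, one of the two pure options (using convexity and the large private costs to force $Z\subseteq X_i\cup Y_i$, then the covering property); (iv) define the two profiles $\vec x,\vec y$ and verify via the first-order conditions that both are equilibria; (v) observe they are distinct, hence the strong uniqueness property fails.

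The main obstacle I anticipate is step (ii)--(iii): making the embedding genuinely well-defined and injective on each $E_i$ while simultaneously forcing the three cyclic overlaps to be consistent, and then arguing rigorously that the ``forbidden'' sets $Z\not\subseteq X_i\cup Y_i$ are never used in equilibrium. The subtlety is that the $a_i,b_i,c_i$ for different players must be identified across players in exactly the triangle pattern, yet within a single player the three need not behave like three edges of a path --- one must be careful that the covering property of Lemma~\ref{l.anti-matroid} ($a_i\in Z$ or $\{b_i,c_i\}\subseteq Z$) is precisely what is needed to collapse player $i$'s effective choice on the triangle to the two options ``use $a_i$'' versus ``use $b_i$ and $c_i$'', and that no set straddling both (containing $a_i$ and one of $b_i,c_i$, say) can be part of a cheaper response once the private elements carry huge cost. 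A secondary technical point is choosing the convex cost function and verifying the equilibrium inequality; this is a bounded computation (one inequality, by symmetry), and I expect a linear or mildly nonlinear $c$ to suffice, mirroring the $C_3$ instance that Section~\ref{sec:undirected} refers back to.
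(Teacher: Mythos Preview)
Your overall architecture matches the paper's: extract bad triples via Lemma~\ref{l.anti-matroid}, embed them cyclically onto three shared resources, kill everything else with huge or zero costs, and exhibit the ``all direct'' and ``all indirect'' profiles as two equilibria. The gap is in step (iv), and it stems from a false premise. You assert that the two rotational profiles ``put the same total load on every triangle edge.'' They do not. With weights $d_1=d_2=d_3=1$, the all-direct profile (each player on her single $a_i$-resource) gives total load $1$ on each of the three resources, whereas the all-indirect profile (each player on her pair $\{b_i,c_i\}$) gives total load $2$ on each. There is no symmetry between the two profiles of the kind you are relying on.

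This matters because with a \emph{common} cost function $c$ the all-indirect profile is never an equilibrium. At that profile player~$1$ has load $1$ on each of $f,g$ and $0$ on $e$, with $x_e=x_f=x_g=2$; her marginal cost along the indirect option is $2\bigl(c(2)+c'(2)\bigr)$, while along the direct option it is $c(2)$. Since $c$ is nonnegative and strictly increasing, the former strictly exceeds the latter, so player~$1$ profitably shifts weight to $e$. No choice of a single convex $c$ (linear or otherwise) rescues this. The paper resolves the issue by using \emph{player-specific} costs: player~$i$ sees $c_{i,\cdot}(x)=x^3$ on her own direct resource and $c_{i,\cdot}(x)=x+1$ on the other two (Table~\ref{tablecounterexample}). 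Then at the all-direct profile the marginal comparison is $1+3\le 2+2$, and at the all-indirect profile it is $2(3+1)\le 2^3$, so both are equilibria. The asymmetric, player-specific cubic is the missing ingredient; once you plug it in, your outline goes through essentially as written.
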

\begin{proof}
We will show that there are embeddings $\tau_i:E_i\to E$, $i\in N$, such that the isomorphic game
$\phi_{\tau}(\S)=(\phi_{\tau_1}(\S_1),\ldots, \phi_{\tau_n}(\S_n))$ admits multiple equilibria.

We can assume w.l.o.g.\ that each set system $\S_i$ forms an anti-chain (in the sense that $X\in \S_i, X\subset Y$ implies $Y\not\in \S_i$) since cost functions are non-negative and strictly increasing.
Let us call a non-empty set system $\S_i\subseteq 2^{E_i}$ a \emph{non-matroid} if $\S_i$ is an anti-chain and
$(E_i, \{X\subseteq S : S\in \S_i\})$ is not a
matroid.

Let $\tilde{E}=\bigcup_{i\in N} \tau_i(E_i)$ denote the set of all resources under the embeddings $\tau_i, i\in N$.
The costs on all  resources in $\tilde{E}\setminus{(\tau_1(E_1)\cup \tau_2(E_2)\cup \tau_3(E_3))}$ are  set to zero. Also,
the demands of all players $d_i$ with $i\in N\setminus{\{1,2,3\}}$ are set to zero.
This way, the game is basically determined by 
the players $1,2,3$.
We set the demands $d_1=d_2=d_3=1$.

Let us choose two sets $X, Y$ in $\S_1$ and
$\{a, b, c\}\subseteq X\cup Y$
as described in  Lemma~\ref{l.anti-matroid}.
Let $e:=\tau_1(a), f:=\tau_1(b)$ and $g:=\tau_1(c)$.
We set the costs of all resources in $\tau_1(E_1)\setminus{(\tau_1(X)\cup \tau_1(Y))}$ to some very large cost $M$ (large enough so that player $1$ would never use any of these resources).
The cost on all resources in $(\tau_1(X)\cup \tau_1(Y))\setminus{\{e,f,g\}}$ is set to zero.
This way, player $1$ always chooses a strategy $\tau_1(Z)\subseteq \tau_1(X)\cup \tau_1(Y)$ which, by Lemma~\ref{l.anti-matroid}, either contains
$e$, or it contains both $f$ and $g$. We apply the same construction for player $2$ and $3$,
only changing the role of $e$ to act as $f$ and $g$, respectively.

Note that the so-constructed game is essentially isomorphic to the routing game illustrated in Figure \ref{counterexample2}
if we interpret resource $e$ as arc $(s_1,t_1)$, resource $f$ as arc $(s_2,t_2)$, and resource $g$ as arc $(s_3,t_3)$.
On every edge there is a player specific cost function, given in Table~\ref{tablecounterexample}.

\begin{table}
\centering
\caption{Cost functions used for constructing a game with multiple equilibria.}
\label{tablecounterexample}
\begin{tabular}{c | c c c}
                	& e 				& f 				& g\\ \hline
Player 1 	&$c_{1,e}(x)=x^3$	& $c_{1,f}(x)=x+1$ 	& $c_{1,g}(x)=x+1$ \\ 
Player 2 	&$c_{2,e}(x)=x+1$	& $c_{2,f}(x)=x^3$ 	& $c_{2,g}(x)=x+1$ \\   
Player 3 	&$c_{3,e}(x)=x+1$ 	& $c_{3,f}(x)=x+1$ 	& $c_{3,g}(x)=x^3$ \\ 
\end{tabular}
  
\end{table}

Every player has two possible paths: the direct path that uses only one edge, or the indirect path that uses two edges. We show that the game where everyone puts all their weight on the direct path is a Nash equilibrium, as is the game where everybody puts their weight on the indirect path.

\begin{figure}[h]
\centering
\begin{tikzpicture}[main_node/.style={circle,fill=black!10,draw,minimum size=1em,inner sep=1pt}]
\tikzset{edge/.style ={}}
\tikzset{cut/.style ={thick,fill=red!100,bend right}}
\tikzset{invis/.style ={fill=white!10}}
    \node[main_node] (1) at (0, 0) {$s_1,t_3$};
    \node[main_node] (2) at (2, 2) {$s_2,t_1$};
    \node[main_node] (3) at (-1, 3) {$s_3,t_2$};

\path[] (1) edge [] node [below right] {$e$} (2);
\path[] (2) edge [] node [above] {$f$} (3);
\path[] (3) edge [] node [below left] {$g$} (1);

\end{tikzpicture}
\caption{{Counterexample}}
\label{counterexample2}
\end{figure}
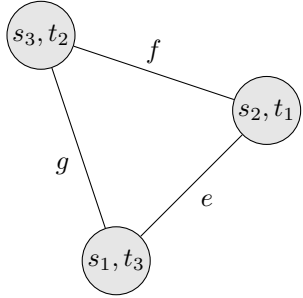
If all players put their weight on the direct route, then player $1$ cannot deviate to decrease it's costs, as:
$$c_{1,e}(1)+c'_{1,e}(1) \cdot 1 = 1+3 \leq 2 + 2 = c_{1,f}(1) + c_{1,g}(1).$$
On the other hand, when all players put their weight on the indirect direct route, player $1$ can also not deviate, as:
$$c_{1,f}(2)+c'_{1,f}(2) \cdot 1 + c_{1,g}(2)+c'_{1,g}(2) \cdot 1= 3+1+3+1 \leq 8 = c_{1,e}(2).$$
The same inequalities hold for player 2 and 3. And therefore everyone playing the direct route, or everyone playing the indirect route both results in a Nash equilibrium.

\end{proof}

\section{A Characterization for  Undirected Graphs}\label{sec:undirected}
In Section~\ref{sec:nonmatroids} we proved that non-matroid set systems in general do not have the strong uniqueness property when there are at least three players, by constructing embeddings $\tau_i$ that lead to the counterexample in Figure~\ref{counterexample2}. This example also gives new insights in uniqueness of equilibria in network congestion games. In the following, we give a characterization of graphs that guarantee uniqueness of Nash equilibria even when player-specific cost functions are allowed. 
\begin{definition}
An undirected graph $G$ is said to have the \emph{uniqueness property} if for any atomic splittable network congestion game on $G=(V,E)$, equilibria are unique.
\end{definition}
Note that in the above definition, we do not specify how source- and sink vertices
are distributed in $V$. We obtain the following result which is related to Theorem 3 of Meunier and Pradeau~\cite{Meunier12}, where a similar result is given for non-atomic congestion games with 
player-specific cost functions.
\begin{theorem}
An undirected graph has the uniqueness property if and only if $G$ has no cycle of length 3 or more.
\end{theorem}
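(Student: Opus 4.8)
The plan is to prove both directions separately. For the "if" direction, I would show that a connected undirected graph with no cycle of length $\geq 3$ is a forest-like structure — more precisely, any two vertices are joined by at most one simple path after contracting parallel edges (i.e., each block is a single edge or a bundle of parallel edges, a so-called "cactus of parallel classes" or simply: the 2-edge-connected components are trivial except for parallel edges). In such a graph, for any source-sink pair $(s_i,t_i)$, every $s_i$-$t_i$ path uses the same sequence of "bridges," and the only freedom is which edge of each parallel bundle to use. This reduces the game to a direct product of independent singleton (parallel-edge) games on each bundle, and singleton games are a special case of matroid (uniform rank-1 matroid) congestion games, which are base orderable; hence Theorem~\ref{uniquenessresult} applies and equilibria are unique. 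I would need to argue carefully that the strategy space of each player, as a polymatroid, decomposes as a product over the bundles and that uniqueness of the product follows from uniqueness of the factors — or, more directly, just invoke that the set system of $s_i$-$t_i$-paths in such a graph is the base set of a (transversal/partition) matroid, so Corollary~\ref{cor:strong-uniqueness} applies.

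For the "only if" direction, I would show the contrapositive: if $G$ contains a cycle $C$ of length $k\geq 3$, then $G$ does \emph{not} have the uniqueness property, i.e., there is some choice of sources, sinks, demands and player-specific cost functions producing multiple equilibria. The key idea is to reuse the three-player construction from the proof of Theorem~\ref{thm:nonmatroid} (Figure~\ref{counterexample2}), which is exactly a $3$-cycle instance with two equilibria. For a general cycle of length $k\geq 3$, pick three edges $e,f,g$ on $C$; these three edges split $C$ into three arcs $A_e, A_f, A_g$ (some possibly trivial). Place three players with unit demand: player $1$ routes between the endpoints that make $e$ the "direct" route and $f\cup g$ (together with the connecting arcs) the "indirect" route, and cyclically for players $2,3$. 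Set cost functions on $e,f,g$ as in Table~\ref{tablecounterexample}, and set all costs on the edges of the connecting arcs $A_e,A_f,A_g$ (and on all edges of $G$ outside $C$) to zero; set all other players' demands to zero. The same two strategy profiles — "everyone direct" and "everyone indirect" — are both Nash equilibria by the identical marginal-cost inequalities verified in the proof of Theorem~\ref{thm:nonmatroid}, since zero-cost edges contribute nothing to marginal costs. This yields two distinct equilibria (they differ on $e$), so $G$ lacks the uniqueness property.

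The main obstacle I anticipate is the "if" direction: precisely characterizing "no cycle of length $\geq 3$" in graph-structural terms and then verifying that the induced path system is genuinely a matroid (equivalently, that the strategy polytope factors as a product of simplices over the parallel bundles) for an \emph{arbitrary} source-sink placement, including degenerate cases where $s_i=t_i$ or where $s_i,t_i$ lie in different components. One must also handle multigraphs carefully — a "cycle of length $2$" formed by two parallel edges is allowed, and the statement hinges on this being the only permitted cycle. A clean way to organize this is: (i) show a graph has no cycle of length $\geq 3$ iff every block is a bond (a set of parallel edges between two vertices) or a single edge; (ii) observe that an $s$-$t$-path then factors through a fixed sequence of blocks, picking one edge per block; (iii) conclude the path set is the base set of a partition matroid, apply Corollary~\ref{cor:strong-uniqueness}. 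The "only if" direction is essentially free given the earlier counterexample, modulo the bookkeeping of assigning zero costs to the padding edges.
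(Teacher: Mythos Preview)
Your plan matches the paper's proof closely. For the ``if'' direction the paper also observes that a graph with no cycle of length $\geq 3$ is a tree with parallel edges, so each player's routing decomposes into independent choices on parallel bundles; the only difference is that the paper then cites Orda et al.\ for uniqueness on parallel links, whereas you invoke the partition-/base-orderable-matroid route via Corollary~\ref{cor:strong-uniqueness}. Both are valid and the arguments are essentially interchangeable.

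For the ``only if'' direction there is one genuine gap in your bookkeeping. Assigning cost zero to every edge of $G$ outside $C$ is not safe: if the endpoints of your edge $e$ happen to be joined by some path in $G$ that avoids $e,f,g$, player~$1$ can route there at total cost $0$, and then the ``all-direct'' profile is not an equilibrium (player~$1$'s cost on $e$ alone is $1^3=1>0$). The paper handles this by putting cost $x+M$, with $M$ large, on every edge outside $C$, so that no player ever leaves the cycle. Inside $C$ the paper also keeps costs strictly increasing: it picks three \emph{consecutive} vertices $v_1,v_2,v_3$, uses the two short edges $(v_1,v_2),(v_2,v_3)$ as two of the three ``resources'', and treats the remaining arc of $k-2$ edges as the third resource by assigning each of its edges cost $\tfrac{1}{k-2}(x+1)$ or $\tfrac{1}{k-2}x^3$, so that the aggregate cost along the arc reproduces the triangle numbers exactly. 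With that change (large $M$ outside $C$, scaled positive costs on the long arc) your argument goes through; the remaining verification of the two equilibria is identical to the triangle case.
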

\begin{proof}
Let $G=(V,E)$ be the network in an atomic splittable network congestion game. Assume there exists a cycle $C$ in $G$ of length $k$, with $k\geq 3$. Already for three players, we can create a game with multiple equilibria by generalizing the previous example visualized in Figure~\ref{counterexample2}. Pick three clockwise adjacent vertices $v_1,v_2,v_3$ in cycle $C$ and create three players which have equal weight 1. Player 1 has source $v_1$ and sink $v_2$, player 2 has source $v_2$ and sink $v_3$ and player 3 has source $v_3$ and sink $v_1$. Let $c_{i,e}(x)$ be the cost function for player $i$ at resource $e$. Define $c_{i,e}(x)$ as in Table~\ref{tab:costs}.

\begin{table}
\centering
\caption{Cost functions for a game with multiple equilibria, $M$ is sufficiently large.}\label{tab:costs}
\begin{tabular}{c | c | c | c | c}
 $c_{i,e}(x)$               	& $(v_1,v_2)$		& $(v_2,v_3)$ 		& $C \setminus \{(v_1,v_2),(v_2,v_3)\}$  & $e \notin C$ \\ \hline
Player 1 			&$x^3$	& $x+1$ 	& $\frac{1}{k-2} (x+1)$ & $x+M$ \\ 
Player 2 			&$x+1$	& $x^3$ 	& $\frac{1}{k-2} (x+1)$ & $x+M$ \\   
Player 3 			&$x+1$ 	& $x+1$ 	& $\frac{1}{k-2} \; x^3$ & $x+M$ \\ 
\end{tabular}
\end{table}
For the same reason as in Example \ref{counterexample2} this game has two Nash equilibria: one where all players send their flow clockwise, another where all players send all flow {counterclockwise}.

On the other hand, assume no cycle of length 3 or more in $G$ exists, then $G$ is a tree with parallel edges. Thus, for every source $s$ and sink $t$, there is a unique path from $s$ to $t$ in $G$ modulo parallel edges. Therefore, players only have to decide on how to divide their weight over every set of parallel edges they encounter. As the total cost for a player is just the sum of the costs for all resources separately, players compete only in sets of parallel edges. Atomic splittable congestion games on parallel edges with player-specific cost functions are proven to have a unique Nash equilibrium by Orda et al.~\cite{Orda93}. Thus when $G$ does not contain cycles of length 3 or more, Nash equilibria are unique.
\end{proof}

\section*{Acknowledgements}
We thank Umang Bhaskar and Britta Peis for fruitful discussions. We also thank Neil Olver for pointing out the connection to base orderable matroids.

\bibliographystyle{abbrv}
\bibliography{masterbib}

\appendix
\section{Subclasses of base orderable matroids}
\label{inclusionsproof}
We give proofs {of} the inclusions given in Figure~\ref{inclusionmatroids}:
$$\begin{tabularx}{\textwidth}{ l X}
 $\subset_1:$ & A uniform matroid is a partition matroid where the partition contains only one set. \\ 
 $\subset_2:$ & A partition matroid is a laminar matroid if all sets in the laminar family are disjoint. \\
 $\subset_3:$ & A partition matroid is a  transversal matroid where the sets that need to be traversed are either equal {or} disjoint. \\
 $\subset_4:$ & For the laminar matroid, let $\mathcal{F}$ be the underlying laminar family on ground set $S$ with $S \in \mathcal{F}$. Copy each set $X$ in $\mathcal{F}$ exactly $k_X$ times to create multi set $\mathcal{F}'$, where $k_X$ is the number of elements we are allowed to take from set $X$. Now create a directed graph $G=(V,A)$, where $V=\mathcal{F}' \cup S$, and $A=\{(X,Y) \in \mathcal{F}'  \times \mathcal{F}' | X \subseteq Y \} \cup \{ (s,X)\in S \times \mathcal{F}' | s\in X \}$. Let $U$ be the maximal multi set containing only $S$. Then clearly $G$ with starting points $S$ and endpoints $U$ form a gammoid that corresponds to the laminar matroid. \\
 $\subset_5:$ & A transversal matroid is a gammoid according to Corollary 39.5a in  \cite{schrijver2003combinatorial}.\\
 $\subset_6:$ & Every binary matroid is a gammoid if and only if it is a graphic matroid on a generalized series-parallel graph~\cite{Welsh2010}. As every graphic matroid is binary~\cite{Harary1969}, the graphic matroid on a generalized series-parallel graph is a binary gammoid, and thus a gammoid.\\
 $\subset_7:$ & A gammoid is strongly base orderable according to Theorem 42.12 in \cite{schrijver2003combinatorial}.\\
 $\subset_8:$ & A matroid $\mathcal{M}=(R,\mathcal{I})$ is called \emph{strongly base orderable} (SBO) if for every pair of bases $(B,B')$ there exists a bijective function $g:B \rightarrow B'$ such that $B-X+g(X) \in \mathcal{B}$. Take $X={e}$ and $X=B\setminus \{e\}$ to obtain the conditions for base orderable matroids.\\
\end{tabularx}$$

\end{document}